\documentclass[11pt,a4paper]{article}
\usepackage{geometry,color}

\usepackage[latin1]{inputenc}

\usepackage{amsmath,amsthm,amssymb,subdepth}
\usepackage{amsfonts,mathrsfs,enumerate}
\usepackage{setspace}
\usepackage{hyperref}
\usepackage{url}
\title{The Hilbert manifold of asymptotically flat metric extensions}
\date{\today}
\author{Stephen\hspace{-1.5mm}\newlength{\Mheight}
\newlength{\cwidth}
\settoheight{\Mheight}{M}\settowidth{\cwidth}{c}M\parbox[b][\Mheight][t]{\cwidth}{c}\hspace{0.15mm}Cormick\footnote{stephen.mccormick@une.edu.au}\vspace{3mm}\\School of Science and Technology\\University of New England\\Armidale, 2351\\Australia}
\newtheorem{theorem}{Theorem}[section]

\newtheorem{corollary}[theorem]{Corollary}

\newtheorem{definition}[theorem]{Definition}

\newtheorem{lemma}[theorem]{Lemma}

\newtheorem{prop}[theorem]{Proposition}
\newtheorem{remark}[theorem]{Remark}

\newcommand{\onabla}{\mathring\nabla}

\newcommand{\og}{\mathring g}

\newcommand{\xiref}{\xi_{\tref}}

\newcommand{\R}{\mathbb{R}}

\newcommand{\bH}{\overline{H}}

\newcommand{\Hdir}{H^2_\delta\cap \bH^1_\delta}
\newcommand{\Hdirh}{H^2_{-1/2}\cap \bH^1_{-1/2}}

\numberwithin{equation}{section}

\DeclareMathOperator{\tr}{tr}

\DeclareMathOperator{\tref}{ref}

\usepackage{fancyhdr}
\pagestyle{fancy}
\lhead{Stephen McCormick}
\setstretch{1.1}
 
\begin{document}
\maketitle
\begin{abstract}
In [Comm. Anal. Geom., 13(5):845-885, 2005.], Bartnik described the phase space for the Einstein equations, modelled on weighted Sobolev spaces with local regularity $(g,\pi)\in H^2\times H^1$. In particular, it was established that the space of solutions to the contraints form a Hilbert submanifold of this phase space. The motivation for this work was to study the quasilocal mass functional now bearing his name. However, the phase space considered there was over a manifold without boundary. Here we demonstrate that analogous results hold in the case where the manifold has an interior compact boundary, where Dirichlet boundary conditions are imposed on the metric. Then, still following Bartnik's work, we demonstrate the critical points of the mass functional over this space of extensions correspond to stationary solutions. Furthermore, if this solution is sufficiently regular then it is in fact a static black hole solution. In particular, in the vacuum case, critical points only occur at exterior Schwarzschild solutions; that is, critical points of the mass over this space do not exist generically. Finally, we briefly discuss the case when the boundary data is Bartnik's geometric data.

\end{abstract}
\section{Introduction}
It is well-known that the total mass/energy of an isolated body in general relativity is given by the ADM mass, and that the very nature of general relativity precludes the possibility of a local energy density; however, the notion of the mass contained in a given region of finite extent is still an open problem. This question is particularly peculiar, as it is not that we lack an answer to it, but rather we have many candidates for what this mass should be (See \cite{QLMreview} for a detailed review), many of which are incompatible. Bartnik's quasilocal mass \cite{qlm} is considered by many to give the best answer to this question, if only it were possible to compute in general. The Bartnik mass is described as follows: Given a subset $\Omega$ of some $(\tilde M,\tilde g,\tilde\pi)$, an initial data set satisfying the Einstein constraints, let $\mathcal{PM}$ be the set of asymptotically flat initial data sets satisfying the positive mass theorem, in which $\Omega$ isometrically embeds, with no horizons outside of $\Omega$. The Bartnik mass is then taken as the infimum of the ADM mass over $\mathcal{PM}$. It is conjectured that this infimum is indeed realised; however, while some progress has been made (see \cite{AndersonKhuri,Bartnik02,Corvino2000,MantoulidisSchoen,pengzi2003} and references therein), this is still an open problem in general. In the case where $\Omega$ is bounded by a minimal surface, this is known to be false; in a recent paper of Mantoulidis and Schoen \cite{MantoulidisSchoen}, a sequence of extensions to a stable minimal surface is constructed, whose mass converges to the Bartnik mass. In light of black hole uniqueness theorems, the only possible limit for this sequence is a Schwarzschild solution, so if $\Omega$ is not contained in a slice of Schwarzschild then the infimum is not realised. However, this conjecture is still wide open when the boundary is not a minimal surface.

There are also interesting results by Corvino \cite{Corvino2000} and Miao \cite{pengzi2003}, which demonstrate that if a mass-minimising extension exists, then it must be static and satisfy Bartnik's geometric boundary conditions \cite{tsinghua}. That is, the metric is Lipshitz across the boundary and the mean curvature on each side of the boundary agree. Bartnik's work on the phase space for the Einstein equations \cite{phasespace} was, in part, motivated by the idea of putting Corvino and Miao's work in a more variational setting. Here we work to this end. For more details pertaining to the space $\mathcal{PM}$ and the Bartnik mass, the reader is referred to \cite{qlm,tsinghua}. In this paper, we consider a larger set of extensions to such a bound domain $\Omega$, described by asymptotically flat manifolds with Dirichlet boundary conditions imposed on a compact interior boundary, $\Sigma$. The initial data we consider has local regularity $(g,\pi)\in H^2\times H^1$, with $g$ prescribed on $\Sigma$ in the trace sense.

The structure of this article is as follows. In Section \ref{Slaplace}, we review the mapping properties of the Laplace-Beltrami operator $M$ and show that this is an isomorphism between certain weighted spaces over $M$ with Dirichlet boundary conditions imposed. In Section \ref{Sphasespace}, we apply Bartnik's phase space analysis to the case considered here, where $M$ has an interior boundary and $g$ satisfies Dirichlet boundary conditions. In particular, we prove that the space of asymptotically flat solutions to the constraints, satisfying the Dirichlet boundary conditions, is a Hilbert manifold. Finally, in Section \ref{Scriticalpoints}, we prove a result intimately related to the static metric extension conjecture and Bartnik's quasilocal mass. We prove that critical points of the mass over the space of extensions to $\Omega$ with Dirichlet boundary conditions, correspond to stationary solutions with vanishing stationary Killing vector on $\Sigma$. In particular, if $g$ is sufficiently smooth, this implies $\Sigma$ is the bifurcation surface of a bifurcate Killing horizon that is non-rotating, and by a staticity result of Sudarsky and Wald \cite{SW1}, one expects that the extension is therefore static. We also obtain a version of this result related to the geometric boundary data (Corollary \ref{cormcrit2}).

\section{The Laplace-Beltrami operator on an asymptotically flat manifold with interior boundary}\label{Slaplace}
The constraint equations form a system of geometric PDEs that do not conform exactly to any of the standard classifications; however, it is well-known that morally they behave as an elliptic system. In fact, a great deal of the research on the constraint equations explicitly relies on this ``morally elliptic" structure. For this reason, we first discuss some preliminary results regarding the Laplace-Beltrami operator on an asymptotically flat manifold with interior boundary. The results in this section are to be entirely expected in light of classical results and their counterparts on asympotically flat manifolds without boundary (cf. \cite{AF,ellipticsys,McOwen}), however, it worthwhile to present them here.

It is well-known that while the Laplace operator is not Fredholm on $\R^n$ when considered as a map $H^2\rightarrow L^2$, it is in fact an isomorphism between certain weighted Sobolev/Lebesgue spaces (cf. \cite{NW}). In this section we discuss some properties of the Laplace-Beltrami operator on an asymptotically flat manifold when Dirichlet boundary conditions are imposed.

Throughout, we let $M$ be a smooth, connected manifold with compact boundary $\Sigma$. Further assume that there exists a compact set $K\subset M\cup\Sigma$ such that the complement $M\cup\Sigma\setminus K$ consists of $N$ connected components, each diffeomorphic to $\R^n$ minus the closed unit ball, $B$. For concreteness, we denote these connected components by $N_i$, and the associated diffeomorphisms by $\phi_i:N_i\rightarrow \R^n\setminus B$. Equip $M$ with a smooth background Riemannian metric $\og$, equal to the pullback of the Euclidean metric to each of these ends. Let $r$ be a smooth function on $M$ such that $r(x)=|\phi_i(x)|$ on each $N_i$, and $\frac12<r<2$ on $K$.

In terms of this background asymptotically flat structure, we define the usual weighted Lebesgue and Sobolev norms, respectively as follows:

\begin{align}
\left\|u\right\|_{p,\delta}&=
\left\{
\begin{array}{ll}
\left(\int\left| u\right|^p r^{-\delta p-n}d\mu_0\right)^{1/p},& p<\infty,\\
\text{ess sup}(r^{-\delta}|u|), & p=\infty,
\end{array}\label{weighted1}
\right.
\\
\left\|u\right\|_{k,p,\delta}&=\sum_{j=0}^k\|\onabla^j u\|_{p,\delta-j}.\label{weighted2}
\end{align}
Norms of sections of bundles are defined in the usual way. Note that our convention follows \cite{AF}, where $\delta$ directly indicates the asymptotic behaviour; that is, $u\in L^p_\delta$ behaves as $o(r^\delta)$ near infinity. We denote the completion of the smooth compactly supported functions with respect to these norms, by $L^p$ and $\overline{W}^{k,p}_\delta$. Note that $\overline{W}^{k,p}_\delta$ is a space of functions that vanish on the boundary in the trace sense, along with their first $k-1$ derivatives. We use $W^{k,p}_\delta$ to denote the completion of the smooth functions with bound support, and also use the convention $\bH^k_\delta=\overline{W}^{k,2}_\delta$ and $H^2_\delta=W^{k,2}_\delta$.

It is well-known that weighted versions of the usual Sobolev-type inequalities hold for these norms; see, for example, Theorem 1.2 of \cite{AF}. While these inequalities are generally considered on manifolds without boundary, it is obvious that the proofs remain valid when a boundary is present. One can easily check this, as the proof in \cite{AF} relies only on splitting the norms into integrals over annular regions, rescaling the integrals to integrals over an annulus of fixed radius, then applying the usual local inequalities. The reader is referred to \cite{ellipticsys,McOwen} for more results pertaining to these weighted spaces.

In terms of these weighted Sobolev spaces, we make precise the notion of asymptotically flat manifolds considered here.
\begin{definition}\label{defAF}
An asymptotically flat manifold with $N$ ends and interior boundary, is a manifold $M$, satisfying the properties described above, equipped with a Riemannian metric $g$ satisfying $(g-\og)\in W^{2,k}_{5/2-n}$, for some $k>n/2$.
\end{definition}
Note that the condition $k>n/2$ ensures that the metric is H\"older continuous, by the Sobolev-Morrey embedding.

By comparison to the Laplacian on a bounded domain, it is expected that boundary conditions must be enforced if we hope for $\Delta_g$, the Laplace-Beltrami operator associated with $g$, to be an isomorphism. We impose Dirichlet boundary conditions here, however Neumann boundary conditions could easily be used instead (cf. \cite{maxwell2005solutions}).

First note the following elementary estimate, which follows immediately from Proposition 1.6 of \cite{AF}.

\begin{lemma}\label{lemeasy}
Let $\delta\in\R$, then
\begin{equation}\label{nonfredholm}
\|u\|_{2,2,\delta}\leq C\left(\|\Delta_g u\|_{2,\delta-2}+\|u\|_{2,\delta}\right),
\end{equation}
for any $u\in H^2_\delta$.
\end{lemma}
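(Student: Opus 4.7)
The plan is to reduce the estimate to two classical pieces via a cutoff: one contribution from the asymptotic ends, handled by Proposition 1.6 of \cite{AF}, and one contribution from a bounded neighbourhood of the interior boundary, handled by classical $L^2$ elliptic regularity. Take $\chi\in C^\infty(M)$ with $\chi\equiv 1$ on a bounded neighbourhood of the compact set $K$ and $\chi$ compactly supported in a slightly larger set, and split $u=\chi u+(1-\chi)u$. By the triangle inequality, it is enough to bound each summand, modulo commutator terms that will be absorbed at the end.

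For $(1-\chi)u$, the support lies in the union of asymptotic ends $\bigcup_i N_i$, away from the interior boundary $\Sigma$. Pushing forward by $\phi_i$ to $\R^n\setminus B$ and extending by zero across $B$ realises the summand in an asymptotically flat manifold \emph{without} interior boundary, so Proposition 1.6 of \cite{AF} applies directly and gives the weighted estimate for this piece. For $\chi u$, the support is bounded and $r$ is bounded above and below on it, so the weighted norms coincide with ordinary Sobolev norms up to a fixed constant. Since $g-\og\in W^{k,2}_{5/2-n}$ with $k>n/2$, the metric is H\"older continuous by Definition \ref{defAF}, so the coefficients of $\Delta_g$ are regular enough for the classical $H^2$ estimate
\begin{equation*}
\|\chi u\|_{H^2}\leq C\bigl(\|\Delta_g(\chi u)\|_{L^2}+\|\chi u\|_{L^2}\bigr)
\end{equation*}
to hold on the bounded piece up to the smooth boundary $\Sigma$.

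Adding the two estimates and expanding $\Delta_g(\chi u)=\chi\Delta_g u+[\Delta_g,\chi]u$ introduces a first-order commutator $[\Delta_g,\chi]u$ supported on the bounded annular region $\{0<\chi<1\}$; this is absorbed into the left-hand side via the interpolation $\|\nabla u\|_{L^2}\leq\varepsilon\|\nabla^2 u\|_{L^2}+C_\varepsilon\|u\|_{L^2}$ on that region. The main obstacle is the up-to-the-boundary elliptic estimate in the second step; once this is justified from the regularity of $g$ and the smoothness of $\Sigma$, the remainder is routine partition-of-unity bookkeeping combined with the direct application of Proposition 1.6 of \cite{AF} on the ends.
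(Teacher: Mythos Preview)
Your partition-of-unity strategy is sound in spirit, and on the ends Proposition~1.6 of \cite{AF} applies exactly as you say. The gap is in the boundary piece: the estimate
\[
\|\chi u\|_{H^2}\le C\bigl(\|\Delta_g(\chi u)\|_{L^2}+\|\chi u\|_{L^2}\bigr)
\]
does \emph{not} hold up to $\Sigma$ for arbitrary $\chi u\in H^2$ with no boundary condition imposed. Global $H^2$ regularity up to the boundary (e.g.\ Gilbarg--Trudinger, Theorem~8.12) requires Dirichlet or Neumann data on $\Sigma$; without it, on $\R^3\setminus B_1$ with the flat metric one may take $u_\ell=\chi(r)\,r^{-(\ell+1)}Y_\ell$ with $\chi$ a fixed cutoff equal to $1$ near $r=1$, and check directly that $\|\Delta u_\ell\|_{L^2}+\|u_\ell\|_{L^2}\to 0$ while $\|u_\ell\|_{H^2}\to\infty$ as $\ell\to\infty$. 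Since these $u_\ell$ are compactly supported, the weights are irrelevant and the same family violates (\ref{nonfredholm}) as literally stated for all $u\in H^2_\delta$.

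The paper's own proof is simply the one-line appeal to Proposition~1.6 of \cite{AF}, whose argument rescales annuli and applies local elliptic estimates; the same boundary issue is tacitly present on the innermost region. In practice the paper only invokes (\ref{nonfredholm}) inside the proof of Lemma~\ref{lemineq}, where $u\in H^2_\delta\cap\bH^1_\delta$, and under that Dirichlet hypothesis your second step is genuinely the standard up-to-the-boundary $H^2$ estimate and the remainder of your cutoff-and-commutator argument goes through cleanly. So the repair is to add the hypothesis $u\in H^2_\delta\cap\bH^1_\delta$ (or at least control of the trace of $u$ on $\Sigma$); with that in place, your approach and the paper's citation both yield the inequality.
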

Note that $\delta<\epsilon$ is required for the embedding $W^{k,p}_\delta\hookrightarrow W^{j,p}_{\epsilon}$ to be compact (cf. Lemma 2.1 of \cite{ellipticsys}), in addition to the usual condition $k>j$; that is, the estimate above does not suffice to prove Fredholmness. For this, we require Lemma \ref{lemineq}, below.

\begin{lemma}\label{lemineq}
Let $(M,g)$ be an asymptotically flat manifold as described above, and fix $\delta\in(2-n,0)$. Then for $u\in \Hdir$ we have
\begin{equation}\label{coercivelaplace}
\|u\|_{2,2,\delta}\leq C\|\Delta_g u\|_{2,\delta-2}.
\end{equation}
\end{lemma}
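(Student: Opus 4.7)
The plan is to prove (\ref{coercivelaplace}) by a contradiction argument, preceded by an intermediate estimate in which the lower-order term $\|u\|_{2,\delta}$ from Lemma \ref{lemeasy} is localised to a bounded annular region. The essential external inputs are (i) the classical isomorphism $\Delta_0\colon H^2_\delta(\R^n)\to L^2_{\delta-2}(\R^n)$ for $\delta\in(2-n,0)$ (cf.\ \cite{NW,McOwen}), (ii) the standard Dirichlet elliptic estimate on a bounded domain, and (iii) the maximum principle for harmonic functions.

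I would first establish an intermediate estimate of the form
\[
\|u\|_{2,2,\delta}\leq C\bigl(\|\Delta_g u\|_{2,\delta-2}+\|u\|_{H^1(A)}\bigr),
\]
where $A=\{R\leq r\leq 2R\}$ is a fixed annulus, by a partition of unity. Let $\chi$ be a smooth cutoff equal to $1$ on $\{r\leq R\}$ and $0$ on $\{r\geq 2R\}$, and write $u=u_1+u_2$ with $u_1=\chi u$ and $u_2=(1-\chi)u$. Transported to $\R^n$ via $\phi_i$ on each end, $u_2$ belongs to $H^2_\delta(\R^n)$ (extension by zero is smooth since $u_2$ vanishes on $\{r\leq R\}$), and the classical isomorphism yields $\|u_2\|_{2,2,\delta}\leq C\|\Delta_0 u_2\|_{2,\delta-2}$. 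A weighted Morrey-type embedding, coupled with Definition \ref{defAF} and $k>n/2$, forces pointwise decay of $g-\og$ and its first derivatives, so the coefficients of $\Delta_g-\Delta_0$ are arbitrarily small in $L^\infty(\{r>R\})$ for $R$ large enough, and $\|(\Delta_g-\Delta_0)u_2\|_{2,\delta-2}$ can be absorbed into the left-hand side. The interior piece $u_1$ is compactly supported, vanishes on $\Sigma$, and vanishes near $\{r=2R\}$; on this bounded domain Poincar\'e and the standard Dirichlet elliptic $H^2$ estimate give $\|u_1\|_{H^2}\leq C\|\Delta_g u_1\|_{L^2}$, which is equivalent to the weighted bound on the compact support. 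Summing, and noting that the commutator $[\Delta_g,\chi]u$ is a first-order operator supported on $A$, produces the intermediate estimate.

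Next, suppose (\ref{coercivelaplace}) fails. Then there is a sequence $u_k\in\Hdir$ with $\|u_k\|_{2,2,\delta}=1$ and $\|\Delta_g u_k\|_{2,\delta-2}\to 0$. The intermediate estimate forces $\|u_k\|_{H^1(A)}\geq c>0$ for large $k$. Since $A$ is bounded, the embedding $H^2_\delta\hookrightarrow H^1(A)$ is compact, and a subsequence satisfies $u_k\to u$ strongly in $H^1(A)$ and weakly in $H^2_\delta$. The limit obeys $u\in H^2_\delta$, $\Delta_g u=0$ weakly, and $u|_\Sigma=0$ in trace, so by elliptic regularity $u$ is smooth in $M\setminus\Sigma$ and continuous up to $\Sigma$. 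Rescaled interior elliptic estimates on balls $B(x,|x|/4)$, combined with $u\in L^2_\delta$, yield a pointwise bound $|u(x)|\leq C r^{\delta}$, which tends to zero as $r\to\infty$ since $\delta<0$. The maximum principle on the connected manifold $M$ then forces $u\equiv 0$, contradicting $\|u\|_{H^1(A)}\geq c$.

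The main obstacle is the perturbation step in the intermediate estimate: one must verify that $\Delta_g-\Delta_0$, with coefficients built from $g^{ij}-\delta^{ij}$ and the Christoffel symbols, has small enough $L^\infty$ norm on $\{r>R\}$ to be absorbed into the exterior bound. This uses the asymptotic hypothesis $g-\og\in W^{2,k}_{5/2-n}$ with $k>n/2$, via weighted Morrey embedding, to supply the required pointwise decay. The remaining ingredients are routine consequences of standard weighted Sobolev theory once the intermediate inequality is in place.
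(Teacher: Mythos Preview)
Your argument is correct and follows the same overall strategy as the paper: obtain an estimate with a compact lower-order remainder, then run a contradiction argument that terminates via the maximum principle. The paper's proof is shorter only because it invokes Theorem~1.10 of \cite{AF} as a black box to get the scale-broken estimate
\[
\|u\|_{2,2,\delta}\le C\bigl(\|\Delta_g u\|_{2,\delta-2}+\|u\|_{2,0}\bigr),
\]
whereas you rebuild an equivalent intermediate inequality by hand via the partition of unity, the Euclidean isomorphism on the ends, and the interior Dirichlet estimate on the compact piece. Your route is more self-contained; the paper's is more economical but relies on checking that Bartnik's proof carries over in the presence of the boundary. The contradiction step and the use of the maximum principle are essentially identical in both.

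One technical point deserves care. You assert that Definition~\ref{defAF} with $k>n/2$ forces pointwise decay of $g-\og$ \emph{and its first derivatives}, and hence $L^\infty$ smallness of all coefficients of $\Delta_g-\Delta_0$ on $\{r>R\}$. Weighted Morrey with $k>n/2$ does give $C^0$ decay of $g-\og$, but pointwise control of $\onabla g$ (and hence of the Christoffel symbols appearing in the first-order part of $\Delta_g-\Delta_0$) would require $k>n$, which is not assumed. The perturbation step still goes through, but the first-order term $\Gamma\cdot\nabla u_2$ should be estimated via weighted H\"older and Sobolev inequalities---using $\Gamma\in W^{1,k}_{3/2-n}\hookrightarrow L^{nk/(n-k)}_{3/2-n}$ together with the tail smallness $\|\Gamma\|_{p,3/2-n:\{r>R\}}\to 0$---rather than by an $L^\infty$ bound. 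With that adjustment your intermediate estimate stands and the rest of the proof is fine.
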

\begin{proof}
First note that $\Delta_g$ is asymptotic to the background Laplacian in the sense of \cite{AF} (Definition 1.5). Further note that the proof of Theorem 1.10 of \cite{AF} remains valid on an asymptotically flat manifold with boundary, so we have the scale-broken estimate,
\begin{equation}
\|u\|_{2,2,\delta}\leq C(\|\Delta_g u\|_{2,\delta-2}+\|u\|_{2,0}),
\end{equation}
which does indeed suffice to prove Fredholmness.

From which, we prove (\ref{coercivelaplace}) using a standard argument. Assume, to the contrary, that there exists a sequence $u_i$ such that $\|u_i\|_{2,2,\delta}=1$ and $\Delta_g u_i\rightarrow 0$. Passing to a subsequence, $u_i$ converges weakly in $H^2_\delta$ and by the weighted Rellich compactness theorem it converges strongly in $L^2_0$. Now (\ref{nonfredholm}) implies $u_i$ is Cauchy and therefore converges in $H^2_\delta$. By continuity, we have $\Delta_g u=0$, and therefore we have a non-trivial element of $\ker(\Delta_g)$. However, it can be seen directly from the maximum principle that $\Delta_g$ has trivial kernel in ${\Hdir}$. 
\end{proof}

From Lemma \ref{lemineq}, we establish the following.
\begin{prop}\label{propisolaplace}
For any $\delta\in(2-n,0)$, the map $\Delta_{g}:H^2_\delta\cap\overline{H}^1_\delta(M) \rightarrow L^2_{\delta-2}(M)$ is an isomorphism.
\end{prop}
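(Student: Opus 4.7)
The plan is to combine Lemma \ref{lemineq} with a standard duality argument. Lemma \ref{lemineq} already delivers two of the three required properties. Injectivity is immediate from the coercive estimate (\ref{coercivelaplace}). For closed range, if $\Delta_g u_k \to f$ in $L^2_{\delta-2}$, then (\ref{coercivelaplace}) applied to $u_k-u_\ell$ shows $(u_k)$ is Cauchy in $H^2_\delta\cap\bH^1_\delta$, and the limit $u$ satisfies $\Delta_g u=f$.

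It remains to verify surjectivity. I would do this by showing the orthogonal complement of the range (under the $L^2(d\mu_g)$ pairing) is trivial. Since $d\mu_g$ is uniformly comparable to $d\mu_0$, the standard weighted duality identifies $(L^2_{\delta-2})^*\cong L^2_{\epsilon}$ where $\epsilon:=2-n-\delta$. Crucially, the interval $(2-n,0)$ is invariant under $\delta\mapsto 2-n-\delta$, so $\epsilon$ again lies in $(2-n,0)$.

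Now let $v\in L^2_{\epsilon}$ represent an element annihilating the range, i.e.\ $\int_M v\,\Delta_g u\,d\mu_g=0$ for every $u\in H^2_\delta\cap\bH^1_\delta$. Testing against $u\in C^\infty_c(M\setminus\Sigma)$ gives $\Delta_g v=0$ distributionally. Testing against $u$ supported near $\Sigma$ with $u|_\Sigma=0$ but nontrivial normal derivative, and integrating by parts, extracts the trace condition $v|_\Sigma=0$. Weighted elliptic regularity (interior estimates away from $\Sigma$, boundary regularity near the compact surface $\Sigma$, and the standard asymptotic regularity for solutions of $\Delta_g v=0$ on the asymptotically flat ends) then upgrades $v$ to an element of $H^2_\epsilon\cap\bH^1_\epsilon$. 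Since $\epsilon\in(2-n,0)$, Lemma \ref{lemineq} applies to $v$ and forces $v=0$, completing the proof.

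The main obstacle is the duality bookkeeping: correctly identifying the dual weight $\epsilon=2-n-\delta$, confirming that the Dirichlet condition passes to $v$ in the trace sense, and upgrading the a~priori only $L^2_\epsilon$ function $v$ to $H^2_\epsilon\cap\bH^1_\epsilon$ so that Lemma \ref{lemineq} can be applied. None of these are difficult in themselves, but they require attention near $\Sigma$ and at infinity simultaneously, which is where the presence of the interior boundary genuinely enters the argument.
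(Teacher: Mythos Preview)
Your proposal is correct and follows essentially the same route as the paper: injectivity and closed range from the coercive estimate, then triviality of the annihilator of the range via testing first with compactly supported $u$ (to get $\Delta_g v=0$) and then with $u$ having nontrivial normal derivative at $\Sigma$ (to get $v|_\Sigma=0$).

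The only difference is in the last step. After establishing that $v$ is harmonic, $H^2_{\mathrm{loc}}$, decays at infinity, and vanishes on $\Sigma$, the paper simply invokes the maximum principle directly to conclude $v\equiv0$. You instead upgrade $v$ to $H^2_\epsilon\cap\bH^1_\epsilon$ and re-apply Lemma~\ref{lemineq} at the dual weight $\epsilon=2-n-\delta$; this works because of the symmetry of $(2-n,0)$ that you observed, but it is a slight detour, since Lemma~\ref{lemineq} was itself proved via the maximum principle. The paper's version avoids the weighted regularity upgrade altogether; your version has the mild advantage of making the self-duality of the weight range explicit.
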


\begin{proof}
We simply must prove that $\Delta_g$ is surjective, which is achieved by proving the range is closed and $\Delta_g^*$ has trivial kernel. It is a fairly standard argument to demonstrate that $\Delta_g$ has closed range, which is as follows. Take a sequence $u_i\in H^2_\delta\cap \bH^1_\delta(M)$ such that $\phi_i=\Delta_g u_i$ is Cauchy; that is, any Cauchy sequence in the range. By (\ref{coercivelaplace}), $u_i$ is convergent to some $u$, and by continuity, $\phi_i\rightarrow\Delta_gu$. It follows that $\Delta_g$ has closed range.

It remains to prove that $\Delta_g^*$ has trivial kernel. Note that the adjoint here does not equal the formal $L^2$ adjoint, but rather we interpret the equation $\Delta_g^*v=0$ in the weak sense:
$$\int_M \Delta_g (f) v \,dV=0$$
for all $f\in H^2_\delta\cap \bH^1_\delta(M)$, and from this standard elliptic regularity theory implies $v\in H^2_{\text{loc}}$. In particular, for any $\Omega\subset\joinrel\subset M$, we have
$$\int_\Omega \Delta_g (f)v\, dV=\int_\Omega f\Delta_g(v)\, dV=0 $$
for all $f\in C^\infty_c(\Omega)$, and therefore $\Delta v=0$ on $M$. It then follows that
$$\int_M \Delta_g (f)v\, dV=0=\int_{\partial M}\nabla (f)v\cdot dS$$
for all $f\in H^2_\delta\cap \bH^1_\delta(M)$, and therefore $v\equiv0$ on $\partial M$. Since $v\in L^2_{-5/2}$ is $H^2_{\text{loc}}$ and vanishes on $\partial M$, $v$ vanishes everywhere, again by the maximum principle.
\end{proof}

%%%%%%%%%%%%%%%%%%%%%%%%%%%%%%%%%%%%%%%%%%%%%%%%%%%%%%%%
\section{The phase space}\label{Sphasespace}

In this section we adapt Bartnik's phase space construction to an asymptotically flat manifold with an interior boundary. In particular, we show that the set of asymptotically flat initial data, with $g$ fixed on the boundary, is a Hilbert submanifold of the phase space. For simplicity, we restrict ourselves to the physically relevant case, $n=3$. Several of the results in the case considered here follow by entirely identical arguments as used by Bartnik, so we simply refer to the appropriate places in Ref. \cite{phasespace} for proofs in these instances. In addition to this, many proofs given here involve only small modifications to those given by Bartnik.

The constraint map is given by
\begin{align}
\Phi_0(g,\pi)&=R(g)\sqrt{g}-(\pi^{ij}\pi_{ij}-\frac{1}{2}(\pi^k_k)^2)/\sqrt{g},\\
\Phi_i(g,\pi)&=2\nabla_k\pi^k_i,
\end{align}
where $\sqrt{g}=\frac{\sqrt{\det g}}{\sqrt{\det \og}}$ is a volume form, and $\pi$ is related to the second fundamental form $K$, by $\pi^{ij}=(K^{ij}-g^{ij}\tr_g K)\sqrt{g}$. For a given energy-momentum source $(s,S^i)$, the constraint equations are $\Phi(g,\pi)=(s,S^i)$; in particular, the vacuum constraints are simply $\Phi(g,\pi)=0$.

Now let $(M,\og)$ be an asymptotically flat manifold as described in Section \ref{Slaplace}, where $\og$ will serve as a background metric. As we are motivated by considering extensions to a given compact manifold with boundary, $\Omega$, one should consider $\og$ near $\Sigma$ as coming from the metric on $\Omega$, which is to be extended. More concretely, one may choose $M$ such that it can be glued to $\Omega$ along $\Sigma$, and $\og$ would then be a smooth extension of the metric on $\Omega$. However, we avoid further discussion on $\Omega$ by simply considering $\og$ to be some given background metric. We define the domain and codomain of $\Phi$ in terms of weighted Sobolev spaces:
$$\mathcal{G}:=\{g\in S_2:g>0, (g-\og)\in \Hdirh(M)\},$$
$$\mathcal{K}:=H^1_{-3/2}(S^2\otimes \Lambda^3),\qquad\mathcal{N}:=L^2_{-5/2}(\Lambda^3\times T^*M\otimes\Lambda^3),$$
where $\Lambda^k$ is the space of $k$-forms on $M$, and $S_2$ and $S^2$ are symmetric covariant and contravariant 2-tensors on $M$ respectively. The phase space is the set of prospective initial data, $\mathcal{F}=\mathcal{G}\times\mathcal{K}$. The proofs of Proposition 3.1 and Corollary 3.2 of \cite{phasespace} apply directly in the case considered here, and it therefore follows immediately that $\Phi:\mathcal{F}\rightarrow\mathcal{N}$ is a smooth map of Hilbert manifolds.

It is interesting to note that at the time of publication, Bartnik's phase space concerned initial data that was slightly too rough to apply known results on the well-posedness of the Cauchy problem; however, through the positive resolution of the bounded $L^2$ curvature conjecture, Klainerman, Rodnianski and Szeftel \cite{boundedl2} have recently improved the local existence and uniqueness results to the case considered by Bartnik, and indeed the case considered here.

The key to proving that the level sets of $\Phi$ are Hilbert submanifolds, is a standard implicit function theorem style argument. As such, we study the linearisation of $\Phi$, which at at a point $(g,\pi)\in\mathcal{F}$, is given by
\begin{align}
D\Phi_{0\,(g,\pi)}[h,p]=&\,(\pi^k_k\pi^{ij}-2\pi^{ik}\pi^j_k)h_{ij}+\tr (h)(\frac{1}{2}\pi\cdot\pi-\frac14(\tr\pi)^2) /\sqrt{g}\nonumber\\
&+(\frac12\tr(h)R-\Delta\tr(h)+\nabla^i\nabla^jh_{ij}-R^{ij}h_{ij})\sqrt{g}\nonumber\\
&+(\tr(p)\tr(\pi)-2\pi\cdot p)/\sqrt{g}\\
D\Phi_{i\,(g,\pi)}[h,p]=&\,2\nabla_j(\pi^{jk}h_{ik})-\pi^{jk}\nabla_i h_{jk}+2\nabla_j p_i^j\label{Dconstraint2},
\end{align}
for $(h,p)\in T_{(g,\pi)}\mathcal{F}$. The formal $L^2$ adjoint is then computed as
\begin{align}
D\Phi^F_{1\,(g,\pi)}[N,X]=&\,N\left(\pi^k_k\pi^{ij}-2\pi^{ik}\pi^k_k+(\frac12\pi^{kl}\pi_{kl}-\frac14(\pi^k_k)^2)g^{ij}\right)/\sqrt{g}+\mathcal{L}_X\pi^{ij}\nonumber\\
&+\left(N(\frac12Rg^{ij}-R^{ij})+\nabla^i\nabla^kN-g^{ij}\nabla^k\nabla_kN\right)\sqrt{g}\label{adjointg}\\
D\Phi^F_{2\,(g,\pi)}[N,X]=&\,N(g_{ij}\pi^k_k-2\pi_{ij})/\sqrt{g}-\mathcal{L}_Xg_{ij}\label{adjointpi},
\end{align}
where $(N,X)\in \mathcal{N}^*=L^2_{-5/2}(\Lambda^0\times TM)$ and $\mathcal{L}$ is the Lie derivative on $M$. Note that we use the superscript `$F$' for the formal adjoint, rather than `$*$', which we reserve for the true adjoint.

We first give a coercivity estimate for $D\Phi_{(g,\pi)}^F$. It should be noted that this is simply Bartnik's Proposition 3.3 of \cite{phasespace}; however, particularly since there is a minor modification to the proof at the end, there is no harm in presenting the computation here. Furthermore, there is a minor omission in the argument of Bartnik that relies on a local version of this estimate, which we address in the proof of Proposition \ref{propweakstrong}. Note that for simplicity of presentation, we write $\xi=(N,X)$, which may be interpreted as a 4-vector in the spacetime.
\begin{prop}\label{propdphibound}
For all $\xi\in W^{2,2}_{-1/2}$, $D\Phi^F_{(g,\pi)}$ satisfies,
\begin{equation}
\|\xi\|_{2,2,-1/2}\leq C\left(\| D\Phi^F_{1\,(g,\pi)}[\xi]\|_{2,-5/2}+\|D\Phi^F_{2\,(g,\pi)}[\xi]\|_{1,2,-3/2}+\|\xi\|_{2,0}\right).
\end{equation}
\end{prop}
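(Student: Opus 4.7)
The plan is to follow Bartnik's proof of Proposition~3.3 of \cite{phasespace} closely, isolating the leading symbol of $D\Phi^F_{(g,\pi)}$ and applying weighted scale-broken elliptic estimates, with minor modifications to accommodate the interior boundary $\Sigma$. Writing $\xi=(N,X)$, the principal part of $D\Phi^F_1$ acting on $N$ is $(\nabla^i\nabla^j N - g^{ij}\Delta_g N)\sqrt{g}$; taking the $g$-trace annihilates the Hessian (since $g_{ij}g^{ij}=3$) and leaves $-2\Delta_g N\sqrt{g}$, modulo zeroth-order terms in $N$ and the Lie derivative $\mathcal{L}_X\pi$. The principal part of $D\Phi^F_2$ acting on $X$ is simply $-\mathcal{L}_X g$, the Killing (deformation) operator, with a zeroth-order coupling to $N$ through $\pi$.

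The two scale-broken estimates that drive the argument are
\[
\|N\|_{2,2,-1/2}\le C\bigl(\|\Delta_g N\|_{2,-5/2}+\|N\|_{2,0}\bigr),
\]
\[
\|X\|_{2,2,-1/2}\le C\bigl(\|\mathcal{L}_X g\|_{1,2,-3/2}+\|X\|_{2,0}\bigr).
\]
The first is the scale-broken estimate for $\Delta_g$ in the spirit of Theorem~1.10 of \cite{AF}, adapted to the space $W^{2,2}_{-1/2}$ (no boundary data on $N$). The second is its counterpart for the overdetermined-elliptic Killing operator, which falls within the asymptotic operator framework of \cite{AF} since it is asymptotic to the Euclidean Killing operator on each end. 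Both remain valid on an AF manifold with compact interior boundary $\Sigma$: the annular-rescaling argument on the asymptotic ends is unchanged, and on the compact piece containing $\Sigma$ one invokes standard interior $H^2$ elliptic estimates (without boundary conditions), with any contribution near $\Sigma$ absorbed into the $\|\cdot\|_{2,0}$ term.

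To conclude, substitute the expressions for $\Delta_g N$ and $\mathcal{L}_X g$ back into the two scale-broken estimates, producing cross terms of the form $\|\mathcal{L}_X\pi\|_{2,-5/2}$, $\|N\pi\|_{1,2,-3/2}$, and zeroth-order $N$ contributions. These are controlled by weighted multiplication inequalities together with a localization argument: write $\pi=\chi_R\pi+(1-\chi_R)\pi$ with $\chi_R$ a cutoff to large $r$; the decaying part yields a small multiple of $\|\xi\|_{2,2,-1/2}$ (absorbable into the left-hand side), while the compactly supported part is controlled by $\|\xi\|_{2,0}$ up to an $R$-dependent constant. Summing the two estimates and absorbing produces the required inequality. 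The main technical point, which is the ``minor modification at the end'' referred to above, is verifying the scale-broken Killing estimate in the AF-with-boundary setting without imposing boundary data on $X$; this is where the extra $\|\xi\|_{2,0}$ term is essential.
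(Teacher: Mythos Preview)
Your approach has a genuine gap in the treatment of $N$. By taking the $g$-trace of $D\Phi^F_1$ you reduce to the scalar Laplacian $\Delta_g N$ and then invoke a scale-broken estimate
\[
\|N\|_{2,2,-1/2}\le C\bigl(\|\Delta_g N\|_{2,-5/2}+\|N\|_{2,0}\bigr)
\]
for $N\in W^{2,2}_{-1/2}$ with \emph{no} boundary condition on $\Sigma$. This estimate is false on a manifold with boundary: on any bounded region containing $\Sigma$ it would reduce to $\|N\|_{H^2}\le C(\|\Delta N\|_{L^2}+\|N\|_{L^2})$ for all $N\in H^2$, which fails (take harmonic functions with increasingly oscillatory boundary values, e.g.\ $r^{k}Y_{k}$ on a ball). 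Interior elliptic estimates do not reach $\Sigma$, and the $\|\cdot\|_{2,0}$ term is only an $L^2$ norm, so it cannot absorb second derivatives near $\Sigma$ as you claim.

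The paper avoids this precisely by \emph{not} taking the trace. Since $D\Phi^F_1$ has principal part $(\nabla^i\nabla^j N-g^{ij}\Delta_g N)\sqrt g$, one can solve algebraically for the full Hessian,
\[
\nabla^i\nabla^j N = S^{ij}-\tfrac12 g^{ij}S^k_{\ k},
\]
with $S$ built from $D\Phi^F_1[\xi]$ and lower-order terms. This is a pointwise identity, so $\|\nabla^2 N\|_{2,-5/2}\le C\|S\|_{2,-5/2}$ holds up to $\Sigma$ with no elliptic theory and no boundary condition. (Equivalently: $N\mapsto(\nabla^i\nabla^j-g^{ij}\Delta)N$ is overdetermined elliptic, and taking the trace destroys exactly this overdeterminedness.) Your Killing-operator step for $X$ is fine for the same reason---that operator is already overdetermined---and the paper does it via the analogous pointwise identity $2\nabla_k\nabla_j X_i=\nabla_k\mathcal L_Xg_{ij}+\nabla_j\mathcal L_Xg_{ik}-\nabla_i\mathcal L_Xg_{jk}-2R_{ikjl}X^l$. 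After these pointwise Hessian bounds, the residual terms $\|\xi\|_{\infty,0}$ and $\|\onabla\xi\|_{3,-1}$ are handled by interpolation against $\|\onabla^2\xi\|_{2,-5/2}$ and $\|\xi\|_{1,2,0}$; the boundary only reappears at the very end, when upgrading control of $\onabla^2\xi$ to the full $\|\xi\|_{2,2,-1/2}$, via a Poincar\'e inequality on an exterior region together with an ordinary compact-region bound.
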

\begin{proof}
We will need to make use of the difference of connections tensor,
$$ \tilde{\Gamma}=\Gamma-\mathring{\Gamma}=\frac{1}{2}g^{il}(\onabla_j g_{lk}+\onabla_k g_{jl}-\onabla_l g_{jk}),$$
which is clearly controlled in $W^{1,2}_{-3/2}$, for $g\in\mathcal{G}$.

Rearranging (\ref{adjointg}) gives
\begin{equation*}
\nabla^i\nabla^j N-g^{ij}\nabla^k\nabla_k N=S^{ij},
\end{equation*}
where $S$ is given by
\begin{align*}
\sqrt{g}S^{ij}=&\,D\Phi_{g}^*[\xi]^{ij}-N\left(\pi^k_k\pi^{ij}-2\pi^{ik}\pi^j_k-\Big( N(\frac{1}{2}Rg^{ij}-R^{ij})\Big)\sqrt{g}\right.\\
&\left.+\Big(\frac{1}{2}\pi^{kl}\pi_{kl}-\frac{1}{4}(\pi^k_k)^2\Big) g^{ij}\right)/\sqrt{g} +\mathcal{L}_X\pi^{ij}.
\end{align*}
From this, we can then write
\begin{equation}
\nabla^i\nabla^j N=S^{ij}-\frac{1}{2}g^{ij}S^k_k,
\end{equation}
which gives an estimate for $\nabla^2N$:
\begin{equation}
\|\nabla^2 N\|_{2,-5/2}\leq C\|S\|_{2,-5/2}.
\end{equation}
Noting that $(g,\pi)$ is fixed and $\xi=(N,X)$, the standard weighted Sobolev-type inequalities give
\begin{align*}
\|\onabla^2N\|_{2,-5/2}\leq &\,C \Big{(}\|D\Phi^F_{1\,(g,\pi)}[\xi]\|_{2,-5.2}+\|\tilde{\Gamma}\onabla N\|_{2,-5/2}+\|\pi\onabla X\|_{2,-5/2}\\
&+\|X\onabla\pi\|_{2,-5/2}+\|N\|_{\infty,0}(\|\pi^2\|_{2,-5/2}+\|Ric(g)\|_{2,-5/2})\Big{)}\\
\leq & \,C\left(\|D\Phi^F_{1\,(g,\pi)}[\xi]\|_{2,-5/2}+\|\xi\|_{\infty,0}+\|\onabla\xi\|_{3,-1}(\|\tilde{\Gamma}\|_{6,-3/2}+\|\pi\|_{6,-3/2})\right)\\
\leq & \,C\left(\|D\Phi^F_{1\,(g,\pi)}[\xi]\|_{2,-5/2}+\|\xi\|_{\infty,0}+\|\onabla\xi\|_{3,-1}(\|\tilde{\Gamma}\|_{1,2,-3/2}+\|\pi\|_{1,2,-3/2})\right)\\
\leq & \,C\left(\|D\Phi^F_{1\,(g,\pi)}[\xi]\|_{2,-5/2}+\|\xi\|_{\infty,0}+\|\onabla\xi\|_{3,-1}\right).
\end{align*}

The Bianchi identity, the identity $R_{ijkl}X^l=\nabla_i\nabla_j X_k-\nabla_j\nabla_i X_k$, and a bit of algebraic manipulation result in
$$\nabla_k\mathcal{L}_X g_{ij}+\nabla_j\mathcal{L}_X g_{ik}-\nabla_i\mathcal{L}_X g_{jk}=2(R_{ikjl}X^l+\nabla_k\nabla_j X_i),$$
and therefore can estimate $\nabla^2X$ by
\begin{equation}
\|\nabla^2X\|_{2,-5/2}\leq C(\|Riem(g)\|_{2,-5/2}\|X\|_{\infty,0}+\|\nabla \mathcal{L}_X g\|_{2,-5/2}).
\end{equation}
By writing the Riemann tensor explicitly in terms of $g,\onabla g$ and $\onabla^2 g$, it is clear that we can control $\|Riem(g)\|_{2,-5/2}$ for $g\in\mathcal{G}$; the Riemann tensor is quadratic in $\onabla g$ and linear in $\onabla^2 g$\footnote{This entirely straightforward, albeit aesthetically unpleasing, computation can be found, for example, in Appendix A of \cite{mythesis}.}.

Making use of (\ref{adjointpi}), the Lie derivative is expressed as
\begin{equation}
\mathcal{L}_X g_{ij}=N(g_{ij}\pi^k_k-2\pi_{ij})g^{-1/2}-D\Phi^F_{2\,(g,\pi)}[\xi]_{ij},
\end{equation}
and from this, the weighted Sobolev-type inequalities give
\begin{align*}
\|\nabla \mathcal{L}_X g\|_{2,-5/2}\leq &C  \,(\|\nabla(N\pi)\|_{2,-5/2}+\|\nabla D\Phi^F_{2\,(g,\pi)}[\xi]\|_{2,-5/2})\\
\leq &C \, \Big{(}\|\onabla D\Phi^F_{2\,(g,\pi)}[\xi]\|_{2,-5/2}+\|\tilde{\Gamma}\|_{4,-1}\|D\Phi^F_{2\,(g,\pi)}[\xi]\|_{4,-3/2}\\
&+\|\onabla N\|_{3,-1}\|\pi\|_{6,-3/2}+\|N\|_{\infty,0}(\|\onabla\pi\|_{2,-5/2}+\|\tilde{\Gamma}\pi\|_{2,-5/2})\Big{)}\\
\leq &C  \, \Big{(}\|\onabla D\Phi^F_{2\,(g,\pi)}[\xi]\|_{2,-5/2}+\|\tilde{\Gamma}\|_{1,2,-3/2}\|D\Phi^F_{2\,(g,\pi)}[\xi]\|_{1,2,-3/2}\\
&+\|\onabla N\|_{3,-1}\|\pi\|_{1,2,-3/2}+\|N\|_{\infty,0}(\|\onabla\pi\|_{2,-5/2}+\|\tilde{\Gamma}\|_{1,2.-3/2}\|\pi\|_{1,2,-3/2})\Big{)}\\
\leq &C  \, \big{(}\|D\Phi^F_{2\,(g,\pi)}[\xi]\|_{1,2,-3/2}+\|N\|_{\infty,0}+\|\onabla N\|_{3,-1}\big{)}.
\end{align*}
We now obtain an estimate for $\|\onabla^2 X\|$ in terms of $\|\nabla^2 X\|$ as follows:
\begin{align*}
\|\onabla^2X\|_{2,-5/2}\leq C&  \, \big{(}\|\nabla^2 X\|_{2,-5/2}+\|\onabla (X)\tilde{\Gamma}\|_{2,-5/2}+\|X\onabla(\tilde{\Gamma})\|_{2,-5/2}\\
&+\|\tilde{\Gamma}^2X\|_{2,-5/2}\big{)}\\
\leq &C \, \big{(}\|\nabla^2 X\|_{2,-5/2}+\|\onabla X\|_{3,-1}\|\tilde{\Gamma}\|_{6,-3/2}\\
&+\|X\|_{\infty,0}(\|\onabla\tilde{\Gamma}\|_{2,-5/2}+\|\tilde{\Gamma}^2\|_{2,-5/2})\big{)}\\
\leq &C \, \big{(}\|\nabla^2 X\|_{2,-5/2}+\|\onabla X\|_{3,-1}\|\tilde{\Gamma}\|_{1,2,-3/2}\\
&+\|X\|_{\infty,0}(\|\onabla\tilde{\Gamma}\|_{2,-5/2}+\|\tilde{\Gamma}\|^2_{1,2,-3/2})\big{)}\\
\leq &C  \, \big{(}\|\nabla^2 X\|_{2,-5/2}+\|\onabla X\|_{3,-1}+\|X\|_{\infty,0}\big{)}.
\end{align*}

Combining these we have
\begin{equation}
\|\onabla^2\xi\|_{2,-5/2}\leq C\left(\|D\Phi^F_{1\,(g,\pi)}[\xi]\|_{2,-5/2}+\|D\Phi^F_{2\,(g,\pi)}[\xi]\|_{1,2,-3/2}+\|\xi\|_{\infty,0}+\|\onabla\xi\|_{3,-1}\right).\label{coercive1}
\end{equation}
The last two terms on the right-hand side are estimated using the weighted inequalities, Young's inequality, and the definition of the $W^{k,p}_\delta$ norm directly:
\begin{align}
\|\xi\|_{\infty,0}&\leq c \|\xi\|_{1,4,0}=\|\xi^{1/4}\xi^{3/4}\|_{1,4,0}\nonumber\\
& \leq c \|\xi^{1/4}\|_{1,8,0}\|\xi^{3/4}\|_{1,8,0}\nonumber\\
& \leq c \|\xi\|^{1/4}_{1,2,0}\|\xi\|^{3/4}_{1,6,0}\nonumber\\
& \leq c  \|\xi\|^{1/4}_{1,2,0}\|\xi\|^{3/4}_{2,2,0}\label{infty0bound}\\
& \leq c \|\xi\|^{1/4}_{1,2,0}(\|\xi\|_{1,2,0}+\|\onabla^2\xi\|_{2,-2})^{3/4}\nonumber\\
& \leq c \epsilon^{-3}\|\xi\|_{1,2,0}+\epsilon(\|\xi\|_{1,2,0}+\|\onabla^2\xi\|_{2,-2})\nonumber\\
& \leq c \epsilon^{-3}\|\xi\|_{1,2,0}+\epsilon\|\onabla^2\xi\|_{2,-2},\nonumber
\end{align}
for any $\epsilon>0$.

An estimate for the final term in (\ref{coercive1}) is obtained almost identically:
\begin{align}
\|\onabla\xi\|_{3,-1}&\leq  \|\xi\|_{1,3,0}=\|\xi^{1/3}\xi^{2/3}\|_{1,3,0}\nonumber\\
& \leq c \|\xi^{1/3}\|_{1,6,0}\|\xi^{2/3}\|_{1,6,0}\nonumber\\
& \leq c \|\xi\|^{1/3}_{1,2,0}\|\xi\|^{2/3}_{1,4,0}\nonumber\\
& \leq c  \|\xi\|^{1/3}_{1,2,0}\|\xi\|^{2/3}_{2,2,0}\label{31bound}\\
& \leq c \|\xi\|^{1/3}_{1,2,0}(\|\xi\|_{1,2,0}+\|\onabla^2\xi\|_{2,-2})^{2/3}\nonumber\\
& \leq c \epsilon^{-2}\|\xi\|_{1,2,0}+\epsilon(\|\xi\|_{1,2,0}+\|\onabla^2\xi\|_{2,-2})\nonumber\\
& \leq c \epsilon^{-2}\|\xi\|_{1,2,0}+\epsilon\|\onabla^2\xi\|_{2,-2}\nonumber.
\end{align}
By inserting these estimates back into (\ref{coercive1}), we obtain
$$\|\onabla^2\xi\|_{2,-5/2}\leq C \big{(}\|D\Phi^F_{1\,(g,\pi)}[\xi]\|_{2,-5/2}+\|D\Phi^F_{2\,(g,\pi)}[\xi]\|_{1,2,-3/2}\big{)}+c(\epsilon)\|\xi\|_{1,2,0}+\epsilon\|\onabla^2\xi\|_{2,-2};$$
choosing $\epsilon$ small enough and applying the interpolation inequality gives
\begin{equation}
\|\onabla^2\xi\|_{2,-5/2}\leq C \left(\|D\Phi^F_{1\,(g,\pi)}[\xi]\|_{2,-5/2}+\|D\Phi^F_{2\,(g,\pi)}[\xi]\|_{1,2,-3/2}+\|\xi\|_{2,0}\right).
\end{equation}
Up to this point, we have essentially reproduced Bartnik's argument, albeit with slightly more detail, and if we had a weighted Poincar\'e ineqality we would be done; however, we are unaware of an appropriate Poincar\'e inequality in the case of a general asymptotically flat manifold with an interior boundary. Instead we consider separately the inequality near infinity, where we do have an appropriate Poincar\'e inequality, and on a compact domain. For some exterior region $E_{R_0}$ we have the Poincar\'e inequality and therefore it follows that we have
$$\|\xi\|_{2,2,-1/2}\leq C \left(\|D\Phi^F_{1\,(g,\pi)}[\xi]\|_{2,-5/2}+\|D\Phi^F_{2\,(g,\pi)}[\xi]\|_{1,2,-3/2}+\|\xi\|_{2,0}+\|\xi\|_{1,2:M\setminus E_{R_0}}\right).$$
Applying the interpolation inequality again and noting $\|\xi\|_{2:M\setminus E_{R_0}}\leq C\|\xi\|_{2,0}$ completes the proof.
\end{proof}

\begin{remark}\label{remark1}
While Proposition \ref{propdphibound} gives an estimate on $M$, the weighted H\"older, Sobolev and interpolation inequalities used above are also valid on an annular region $A_R:=\{x\in M:r(x)\in(R,2R)\}$ (cf. \cite{AF}). In particular, we have
\begin{equation}\label{annular}
\|\onabla^2\xi\|_{2,-5/2:A_R}\leq C\Big{(}\|D\Phi^F_{1\,(g,\pi)}[\xi]\|_{2,-5/2:A_R}+\|D\Phi^F_{2\,(g,\pi)}[\xi]\|_{1,2,-3/2:A_R}+\|\xi\|_{2,0:A_R}\Big{)}
\end{equation}
for $\xi\in W^{2,2}_{\delta}(A_R)$, where $C$ is independent of $R$. However, we do not have the same control on $\|\xi\|_{2,2,-1/2:A_R}$, as the constant in the Poincar\'e inequality depends on $A_R$.
\end{remark}

Note that the true adjoint of the linearised constraint map, $D\Phi^*_{(g,\pi)}$, is only defined in the weak sense, which is why we make the distinction between $D\Phi^*_{(g,\pi)}$ and $D\Phi^F_{(g,\pi)}$. In order to study the kernel of $D\Phi^*_{(g,\pi)}$ we must first demonstrate that weak solutions to the equation $D\Phi^*_{(g,\pi)}[\xi]=0$ are sufficiently regular to consider this as a bona fide differential equation.

\begin{prop}\label{propweakstrong}
Suppose $\xi\in \mathcal{N}$ is a weak solution of $D\Phi_{(g,\pi)}^*[\xi]=(f_1,f_2)$, where ${(f_1,f_2)\in L^2_{-5/2}\times W^{1,2}_{-3/2}}$ and $(g,\pi)\in\mathcal{F}$, then $\xi\in\Hdirh$ and furthermore, ${D\Phi^*_{(g,\pi)}[\xi]=D\Phi^F_{(g,\pi)}[\xi]}$.
\end{prop}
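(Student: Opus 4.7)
My plan is to adapt the corresponding argument of Bartnik (Proposition 3.4 of \cite{phasespace}), making two modifications for the present boundary setting: deploying the local/annular version of the coercivity estimate from Remark \ref{remark1} (which is the ``omission'' alluded to before Proposition \ref{propdphibound}), and extracting the Dirichlet trace of $\xi$ on $\Sigma$ from the choice of test function space.

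\emph{Interior regularity and weighted decay.} Testing the weak equation against $(h,p)$ compactly supported in the interior of $M$ gives $D\Phi^F_{(g,\pi)}[\xi]=(f_1,f_2)$ distributionally on $M\setminus\Sigma$. The system is elliptic in the Douglis-Nirenberg sense: on $N$ the principal part is the trace-modified Hessian $\nabla^i\nabla^jN-g^{ij}\Delta N$, and on $X$ the second-derivative control is recovered by applying $\onabla$ to $\mathcal{L}_Xg$ and invoking the Killing-type identity already exploited in the proof of Proposition \ref{propdphibound}. A standard bootstrap yields $\xi\in H^2_{\mathrm{loc}}(M\setminus\Sigma)$. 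To promote this to the weighted space $H^2_{-1/2}$ I would sum (\ref{annular}) over a dyadic family of annuli $A_{R_k}$ exhausting each asymptotic end,
\begin{equation*}
\|\onabla^2\xi\|_{2,-5/2}^2\le C\sum_k\left(\|f_1\|^2_{2,-5/2:A_{R_k}}+\|f_2\|^2_{1,2,-3/2:A_{R_k}}+\|\xi\|^2_{2,0:A_{R_k}}\right),
\end{equation*}
the first two sums being finite by hypothesis and the last by the continuous inclusion $\|\xi\|_{2,0}\le C\|\xi\|_{2,-5/2}<\infty$. Weighted interpolation fills in the intermediate $\|\onabla\xi\|_{2,-3/2}$ term, giving $\xi\in H^2_{-1/2}$ modulo boundary behaviour.

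\emph{Boundary trace and identification.} The new ingredient is $\xi\in\bH^1_{-1/2}$, i.e.\ $N|_\Sigma=X|_\Sigma=0$. I would first establish $H^1$-regularity up to $\Sigma$ by localizing the weak equation near the boundary and applying tangential difference quotients, so that the $L^2(\Sigma)$ traces of $N$ and $X$ are well-defined. Then, integrating by parts in $\int_M\langle D\Phi_{(g,\pi)}[h,p],\xi\rangle$, using $h|_\Sigma=0$ (with $\partial_\nu h$ otherwise arbitrary) and $p|_\Sigma$ arbitrary, the tangential derivatives of $h|_\Sigma$ drop out and the Hamiltonian part leaves a boundary term of the schematic form
\begin{equation*}
\int_\Sigma N\bigl(\nu_i\nu^j\partial_\nu h_{ij}-\partial_\nu\tr(h)\bigr)dS,
\end{equation*}
which, holding for every admissible $h$, forces $N|_\Sigma=0$. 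The momentum part yields $2\int_\Sigma p_{ij}X^i\nu^j\,dS$, and the arbitrariness of $p|_\Sigma$ together with its symmetry gives $X|_\Sigma=0$ via a short algebraic argument. With the Dirichlet trace in hand, boundary elliptic regularity upgrades $\xi$ to $H^2_{-1/2}$ up to $\Sigma$, so $\xi\in\Hdirh$. The equality $D\Phi^*_{(g,\pi)}[\xi]=D\Phi^F_{(g,\pi)}[\xi]$ is then immediate: all boundary contributions in the defining integration by parts vanish thanks to the Dirichlet conditions on both $\xi$ and the test functions.

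\emph{Main obstacle.} The delicate point is the apparent circularity in the boundary step: $H^2$-regularity up to $\Sigma$ is classically obtained as the solution of a Dirichlet problem, but the Dirichlet condition is not assumed here --- it must be extracted from the test function space. I would break the circle by first obtaining just enough regularity ($H^1$ up to $\Sigma$, via tangential difference quotients) to define $L^2$ traces, then deriving the vanishing of those traces from integration by parts, and finally bootstrapping to full $H^2$-regularity via boundary elliptic regularity with the now-known Dirichlet data.
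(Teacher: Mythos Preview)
Your proposal is correct and tracks the paper's argument closely: interior $H^2_{\mathrm{loc}}$ regularity from Bartnik's local argument, promotion to the weighted space $H^2_{-1/2}$ via the annular estimate of Remark~\ref{remark1}, and extraction of the Dirichlet trace on $\Sigma$ from the vanishing of the boundary integral in (\ref{boundaryterms}) for arbitrary admissible $(h,p)$.

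The one methodological difference worth recording is in the asymptotic step. The paper does \emph{not} sum the annular estimate over a dyadic exhaustion; instead it multiplies $\xi$ by a radial cutoff $\chi_R$, applies the \emph{global} estimate of Proposition~\ref{propdphibound} to $\chi_R\xi\in W^{2,2}_{-1/2}$, and then spends most of the work bounding the commutator terms $D\Phi^F[\chi_R\xi]-\chi_R D\Phi^F[\xi]$. This leaves a residual $\epsilon\|\onabla^2\xi\|_{2,-5/2:A_R}$ on the right which cannot be absorbed into the left-hand side (since $\chi_R$ is not bounded below on $A_R$), and it is precisely here---and only here---that the paper invokes the annular estimate once to close. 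Your dyadic summation bypasses the cutoff and commutator bookkeeping entirely and is a cleaner route to the same bound; it works because the constant in (\ref{annular}) is uniform in $R$ and because $\xi\in L^2_{-1/2}\hookrightarrow L^2_0$ makes the summed lower-order term finite. Conversely, the paper's cutoff argument has the mild advantage that $\chi_R\xi$ is genuinely in $W^{2,2}_{-1/2}$ from the outset, so Proposition~\ref{propdphibound} applies without any a~priori global assumption on $\onabla^2\xi$.

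On the boundary, the paper is terser than you: it asserts that Bartnik's local regularity argument already yields $\xi\in H^2_{\mathrm{loc}}$ up to $\Sigma$ (since the test-function restriction $h|_\Sigma=0$ does not obstruct choosing $(h,p)$ supported in a boundary coordinate chart), and only \emph{afterwards} reads off $\xi|_\Sigma=0$ from the vanishing boundary integral. Your staged approach---$H^1$ up to $\Sigma$ via tangential difference quotients, then trace extraction, then $H^2$ bootstrap---is more explicit about the order of operations and directly addresses the circularity you flag, which the paper glosses over.
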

\begin{proof}
We first note that local regularity follows directly from Bartnik's proof of Proposition 3.5 in Ref. \cite{phasespace}. The only possible place in Bartnik's proof where the boundary terms may come in to play are in choosing $(h,p)$ supported in some coordinate neighbourhood. Clearly our boundary conditions do not prevent this, so there is no obstruction to applying Bartnik's proof directly. That is, $\xi\in H^2_{loc}$.

In the following, let $B_R$ be an open ``ball" of radius $R$; for $R>2$, $B_R:=\{x\in M: r(x)<R\}$, and define $M_{\epsilon R}:=\{x\in B_R: \text{dist}(\Sigma,x)>\epsilon\}$, for some small $\epsilon$.

$$\int_{M_{\epsilon R}}D\Phi_{(g,\pi)}[h,p]\cdot\xi=\int_{M_{\epsilon R}}(h,p)\cdot(f_1,f_2)$$
for all $(h,p)\in C^\infty_c(M_{\epsilon R})$. In particular, since $\xi\in H^2_{-1/2}(M_{\epsilon R})$, we have
$$\int_{M_{\epsilon R}}(h,p)\cdot D\Phi^F_{(g,\pi)}[\xi]=\int_{M_{\epsilon R}}(h,p)\cdot(f_1,f_2);$$
that is, $D\Phi^F_{(g,\pi)}[\xi]=(f_1,f_2)$ on any $M_{\epsilon R}$. We then have $D\Phi^*_{(g,\pi)}[\xi]=D\Phi^F_{(g,\pi)}[\xi]$ on $M$; that is, the formal adjoint is indeed the true adjoint when $(f_1,f_2)\in L^2_{-5/2}\times H^1_{-3/2}$, as expected.

It remains to demonstrate that $\xi$ satisfies the boundary conditions and exhibits the correct asymptotics. To this end, we introduce a new smooth cutoff function $\chi\in C^\infty_c(M)$ such that $\chi\equiv1$ on $B_{R_0}$, for some $R_0>2$ and $\chi=0$ on $B_{2R_0}$. Define $\chi_R(x)=\chi(xR_0/R)$, so that $\chi_R$ has support on $B_{2R}$. Clearly $\chi_R\xi\in W^{2,2}_{-1/2}$, therefore Proposition \ref{propdphibound} gives
\begin{align}
\|\chi_R\xi\|_{2,2,-1/2}\leq &\,C\Big{(}\|D\Phi_1^*[\chi_R\xi]\|_{2,-5/2}+\|D\Phi_2^*[\chi_R\xi]\|_{1,2,-3/2}+\|\xi\|_{2,0}\Big{)}\label{chiRbound},
\end{align}
noting that $\chi_R\xi\rightarrow\xi$ in $L^2_{0}$. From this we can show that $\chi_R\xi$ is uniformly bounded in $W^{2,2}_{-1/2}$. Obtaining control of $\|\chi_R\xi\|_{2,2,-1/2}$ independent of $R$ is the minor omission in Ref. \cite{phasespace} mentioned above, however this is easily resolved as follows.

Note that $\onabla\chi_R(x)=(R_0/R)\onabla\chi(xR_0/R)$, $\onabla\chi$ is bounded, and $\onabla\chi_R$ has support on $A_R$. It follows that we have
\begin{equation*}
\|u\onabla\chi_R\|_{p,\delta}\leq c\|u/R\|_{p,\delta:A_R}\leq c \sup_{x\in A_R}|r(x)/R|\|u\|_{p,\delta+1:A_R}\leq c\|u\|_{p,\delta+1:A_R}.
\end{equation*}
From this, the expression for $D\Phi^F$, and the usual weighted Sobolev-type inequalities, we have
\begin{align*}
\|D\Phi^F_1[\chi_R\xi]\|_{2,-5/2}\leq \, &c\, \Big{(}\|\chi_R D\Phi^F_1[\xi]\|_{2,-5/2}+\|\pi\xi\onabla\chi_R\|_{2,-5/2}\\
&+\|\xi\onabla^2\chi_R\|_{2,-5/2}+\|\onabla(\xi)\onabla(\chi_R)\|_{2,-5/2}\Big{)}\\
\leq\, &c \, \Big{(}\|f_1\|_{2,-5/2}+\|\pi\|_{4,-3/2}\|\xi\|_{4,0:A_R}+\|\xi\|_{2,-1/2}\\
&+\|\onabla\xi\|_{2,-3/2:A_R} \Big{)}\\
\leq\, &c \, \big{(}\|f_1\|_{2,-5/2}+\|\pi\|_{1,2,-3/2}\|\xi\|_{1,2,0:A_R}+\|\xi\|_{2,-1/2}\\
&+\|\onabla\xi\|_{2,-3/2:A_R} \Big{)}\\
\leq\, &C  \, (\|f_1\|_{2,-5/2}+\|\xi\|_{2,-1/2}+\|\onabla\xi\|_{2,-3/2:A_R})\\
\leq\, &C \, (\|f_1\|_{2,-5/2}+\|\xi\|_{2,-1/2})+\epsilon\|\onabla^2\xi\|_{2,-5/2:A_R}.
\end{align*}
Almost identically, we have
\begin{align*}
\|\onabla D\Phi^F_2[\chi_R\xi]\|_{2,-5/2}\leq\, & \, \Big{(}\|\chi_R\onabla D\Phi^F_2[\xi]\|_{2,-5/2}+\|\pi\xi\onabla\chi_R\|_{2,-5/2}\\
&+\|\xi\onabla^2\chi_R\|_{2,-5/2}+\|\onabla(\xi)\onabla(\chi_R)\|_{2,-5/2}\Big{)}\\
\leq\,& C \, (\|\onabla f_2\|_{2,-5/2}+\|\xi\|_{2,-1/2})+\epsilon\|\onabla^2\xi\|_{2,-5/2:A_R}
\end{align*}
and a similar estimate for $\|D\Phi^F_2[\chi_R\xi]\|_{2,-3/2}$ holds, so we have in fact,
$$\|D\Phi^F_2[\chi_R\xi]\|_{1,2,-3/2}\leq C  (\|\onabla f_2\|_{2,-5/2}+\|\xi\|_{2,-1/2})+\epsilon\|\onabla^2\xi\|_{2,-5/2:A_R}.$$

Inserting the estimates above into (\ref{chiRbound}) we arrive at
\begin{align}
\|\onabla^2(\chi_R\xi)\|_{2,-5/2}\leq &\,C  \big{(}\|f_1\|_{2,-5/2}+\|f_2|_{1,2,-3/2}\nonumber\\
&+\|\xi\|_{2,-1/2}+\epsilon\|\onabla^2\xi\|_{2,-5/2:A_R}\big{)}.\label{dummypi}
\end{align}
Unfortunately we are unable to ensure $\|\onabla^2\xi\|_{2,-5/2:A_R}\lesssim\|\onabla^2(\chi_R\xi)\|_{2,-5/2}$, so we can not absorb the last term into the left-hand side of (\ref{dummypi}). Recalling Remark \ref{remark1}, we apply the local version of Proposition \ref{propdphibound} to obtain
\begin{equation}
\|\onabla^2\xi\|_{2,-5/2:A_R}\leq C\|f_1\|_{2,-5/2}+\|f_2\|_{1,2,-3/2}+\|\xi\|_{2,0}.
\end{equation}
Finally we obtain the desired uniform bound:
\begin{align}
\|\chi_R\xi\|_{2,2,-1/2}&\leq C(\|\chi_R\xi\|_{2,-1/2}+\|\onabla^2(\chi_R\xi)\|_{2,-5/2})\nonumber\\
&\leq \,C  \big{(}\|f_1\|_{2,-5/2}+\|f_2\|_{1,2,-3/2}+\|\xi\|_{2,-1/2}\big{)}.
\end{align}
It follows that $\chi_R\xi$ converges weakly to $\xi$ in $H^2_{-1/2}$. Now, since the formal adjoint agrees with the true adjoint, the boundary terms arising from integration by parts necessarily vanish; explicitly (cf. eq. (\ref{boundaryterms})),
\begin{equation}
\oint_\Sigma\left(\xi^0(\nabla_i\text{tr}_gh-\nabla^jh_{ij})\sqrt{g}-2\xi^jp_{ij}\right)dS^i=0,
\end{equation}
for all $(h,p)\in(\Hdirh)\times H^1_{-3/2}$. It follows that $\xi$ vanishes on $\Sigma$ and therefore, $\xi\in\Hdirh$.
\end{proof}

\begin{theorem}\label{thmlevelset}
For all $(s,S)\in\mathcal{N}$, the level set $\mathcal{C}(s,S):=\Phi^{-1}(s,S)$ is a Hilbert submanifold of $\mathcal{F}$. We refer to this as the constraint manifold.
\end{theorem}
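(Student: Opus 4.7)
The proof is a standard implicit function theorem argument applied to the smooth map $\Phi:\mathcal{F}\to\mathcal{N}$. Fix $(g_0,\pi_0)\in\mathcal{C}(s,S)$; since $\mathcal{F}$ is a Hilbert manifold and any closed subspace of a Hilbert space splits, it suffices to verify that the linearization $D\Phi_{(g_0,\pi_0)}:T_{(g_0,\pi_0)}\mathcal{F}\to\mathcal{N}$ is surjective. The IFT will then produce a local chart exhibiting $\mathcal{C}(s,S)$ as a smooth Hilbert submanifold near $(g_0,\pi_0)$, and since this point is arbitrary the result follows. By the closed range theorem, surjectivity amounts to showing both (a) $\ker D\Phi^*_{(g_0,\pi_0)}=\{0\}$, and (b) $D\Phi_{(g_0,\pi_0)}$ has closed range.

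For (b), I would upgrade the semi-coercive estimate of Proposition \ref{propdphibound} to the fully coercive bound $\|\xi\|_{2,2,-1/2}\leq C(\|D\Phi^F_{1\,(g_0,\pi_0)}[\xi]\|_{2,-5/2}+\|D\Phi^F_{2\,(g_0,\pi_0)}[\xi]\|_{1,2,-3/2})$ by the Rellich/contradiction argument used in the proof of Lemma \ref{lemineq}: a sequence of unit norm in $H^2_{-1/2}$ whose images under $D\Phi^F$ tend to zero would, by compactness of the embedding $H^2_{-1/2}\hookrightarrow L^2_0$ together with the semi-coercive estimate, be Cauchy in $H^2_{-1/2}$, with limit a nonzero element of $\ker D\Phi^*_{(g_0,\pi_0)}$, contradicting (a). For (a), Proposition \ref{propweakstrong} shows that any $\xi\in\ker D\Phi^*_{(g_0,\pi_0)}$ already lies in $\Hdirh$, hence vanishes on $\Sigma$ in the trace sense, and satisfies the strong overdetermined adjoint constraint (KID) system $D\Phi^F_{(g_0,\pi_0)}[\xi]=0$. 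Unique continuation for this system, combined with the vanishing Dirichlet data, forces $\xi\equiv 0$.

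The main obstacle is step (a). Unique continuation for the KID system is classical in smoother settings, but here $\xi$ carries only $H^2_{-1/2}$ regularity. The likely resolution is to bootstrap $\xi$ via elliptic regularity applied to the overdetermined adjoint constraint operator (whose principal part, after the standard reductions in the proof of Proposition \ref{propdphibound}, is suitably elliptic on $\xi$) to a regularity class where classical unique continuation applies, then exploit the Dirichlet condition on $\Sigma$ to conclude vanishing. Combining (a) and (b) yields surjectivity of $D\Phi_{(g_0,\pi_0)}$, completing the proof.
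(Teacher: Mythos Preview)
Your overall strategy (implicit function theorem, reducing surjectivity to closed range plus trivial kernel of the adjoint) matches the paper's. However, your route to closed range in (b) has a genuine gap, and your approach to (a) diverges from the paper's in a way worth noting.

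\textbf{On (b), closed range.} The coercive estimate you propose controls $\xi\in H^2_{-1/2}\cap\bH^1_{-1/2}$ in terms of $D\Phi^F[\xi]$ measured in $L^2_{-5/2}\times H^1_{-3/2}$. But the closed range theorem requires an estimate for the \emph{true} adjoint $D\Phi^*:\mathcal{N}^*\to (T\mathcal{F})^*$, i.e.\ for $\xi$ in $L^2_{-1/2}$ (the dual of $\mathcal{N}=L^2_{-5/2}$) with image measured in the dual of $\Hdirh\times H^1_{-3/2}$. Your estimate lives in the wrong pair of spaces on both ends, so it does not yield closed range of $D\Phi$. The paper handles this differently: it restricts $D\Phi$ to variations of the special form $h_{ij}(y)=-\tfrac12 y g_{ij}$, $p^{ij}(Y)=\tfrac12(\nabla^iY^j+\nabla^jY^i-g^{ij}\nabla_kY^k)\sqrt{g}$ with $(y,Y)\in\Hdirh$, obtaining a genuinely second-order elliptic operator $F$ whose principal part is $\Delta_g$ acting on $(y,Y)$. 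Lemma~\ref{lemineq} then gives the scale-broken estimate for $F$ directly, and a standard Rellich/contradiction argument shows $F$ is semi-Fredholm. Since $\ran(F)\subset\ran(D\Phi)$ has finite codimension, $D\Phi$ has closed range, and triviality of $\ker D\Phi^*$ then gives surjectivity. This elliptic-restriction trick is the key idea you are missing.

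\textbf{On (a), trivial kernel.} Your plan invokes unique continuation for the KID system from the Dirichlet data $\xi|_\Sigma=0$. The paper does not use the boundary $\Sigma$ at all here: it quotes Bartnik's original argument, which exploits the decay $\xi\in L^2_{-1/2}$ to show $\xi$ vanishes near infinity, and then propagates vanishing inward by a local ODE-type argument (covering $M$ by small balls), valid at the given $H^2\times H^1$ regularity. Your boundary-based approach would need not only $\xi|_\Sigma=0$ but enough Cauchy data to launch unique continuation from a hypersurface, plus the regularity bootstrap you flag; the paper's asymptotic argument sidesteps both issues.
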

\begin{proof}
By the implicit function theorem, we simply must demonstrate that $D\Phi_{(g,\pi)}$ is surjective and the kernel splits. The kernel trivially splits with respect to the Hilbert structure, so we simply must prove that $D\Phi^*_{(g,\pi)}$ has trivial kernel and $D\Phi_{(g,\pi)}$ has closed range. It is clear from the above, that elements in the kernel of $D\Phi^*_{(g,\pi)}$ indeed satisfy $D\Phi^F_{(g,\pi)}=0$. Once we have this, note that Bartnik's proof of the triviality of $\ker(D\Phi^F_{(g,\pi)})$ relies only on the structure of the equation and the asymptotics assumed\footnote{The proof essentially makes use of the asymptotics to show that any element of the kernel must be supported away from infinity, then shows if an element of the kernel vanishes on a portion of a small ball then it vanishes on the entire ball. By covering $M$ with balls of this (fixed) size, and noting $M$ is connected, the conclusion follows. It is clear a boundary has no impact on this argument.} -- it is entirely unaffected by the inclusion of an interior boundary. Therefore this proof applies here and we simply must prove that $D\Phi_{(g,\pi)}$ is surjective, which is again adapted from Bartnik's arguments to deal with the boundary. The key to making this argument work is the estimate given earlier by Lemma \ref{lemineq}.

The idea is to consider a restriction of $D\Phi_{(g,\pi)}$ to variations of a particular form, so that the operator becomes elliptic. Then we simply must show that this restricted operator has closed range and finite dimensional cokernel. We consider
$$ h_{ij}(y)=-\frac12yg_{ij},\qquad p^{ij}(Y)=\frac12(\nabla^iY^j+\nabla^jY^i-g^{ij}\nabla_kY^k)\sqrt{g}$$
for $y,Y\in \Hdirh(M)$.

For the operator $F[y,Y]:=D\Phi_{(g,\pi)}[h(y),p(Y)]$, we have
$$F[y,Y]=
\begin{bmatrix}
\Delta y\sqrt{g}-\frac{1}{4}\Phi_0(g,\pi)y+\frac{1}{2}\pi^k_k\nabla_j Y^j-2\pi^{ij}\nabla_i Y_j\\
\Delta Y_i \sqrt{g}+R_{ij}Y^j\sqrt{g}-\nabla_j(\pi^j_i)y-\pi_i^j\onabla_jy+\frac{1}{2}\pi^j_j\onabla_iy
\end{bmatrix}.
$$
and the formal adjoint is given by
$$F^F[z,Z]=
\begin{bmatrix}
\Delta z\sqrt{g}-\frac{1}{4}\Phi_0(g,\pi)z+\pi^j_i\onabla_j Z^i-\frac{1}{2}\onabla_i(\pi^j_j Z^i)\\
\Delta Z_j \sqrt{g}+2\nabla_i(\pi^i_j z)-\frac{1}{2}\nabla_j(\pi^i_i z)+R_{ij}Z^i\sqrt{g}
\end{bmatrix}.
$$

It follows from the proof of Proposition \ref{propweakstrong}, that $(z,Z)$ satisfying $F^*[z,Z]=0$ are $H^2_{-1/2}$ and the boundary terms arrising from integration by parts vanish; that is, $(z,Z)\in\Hdirh(M)$. From Lemma \ref{lemineq}, it is straightforward to show using the weighted H\"older, Sobolev and interpolation inequalities (cf. eq. (3.42) of \cite{phasespace}), that we have the scale-broken estimate:
\begin{equation}\label{scalebrokenF}
\|(y,Y)\|_{2,2,-1/2}\leq C(\|F[y,Y]\|_{2,-5/2}+\|(y,Y)\|_{2,0}).
\end{equation}
It is now a standard argument to demonstrate that $F$ has closed range and finite dimensional cokernel (cf. Ref. \cite{ellipticsys}, Theorem 6.3, and Ref. \cite{AF}, Theorem 1.10).

Let $(y,Y)_i$ be a sequence in $\ker(F)$ satisfying $\|(y,Y)\|_{2,2,-1/2}\leq 1$; that is, a sequence in the closed unit ball in $\ker(F)$. By the weighted Rellich compactness theorem, passing to a subsequence, $(y,Y)_{i_n}$ converges strongly in $L^2_0$, which in turn implies via (\ref{scalebrokenF}) that $(y,Y)_{i_n}$ converges strongly in $H^2_{-1/2}$. That is, the closed unit ball in $\ker(F)$ is compact, and therefore $\ker(F)$ is finite dimensional. It follows that the domain of $F$ can be split as $\Hdir=\ker(F)\oplus Z$, for some closed orthogonal complementary subspace, $Z$. Now, for $(y,Y)\in Z$, we prove
\begin{equation}\label{Zeq}
\|(y,Y)\|_{2,2,-1/2}\leq C\|F[y,Y]\|_{2,-5/2}
\end{equation}
by contradiction. Assume there is a sequence $(y,Y)_i\in W$ satisfying ${\|(y,Y)_i\|_{2,2,-1/2}=1}$, while $\|F[y,Y]_i\|_{2,-5/2}\rightarrow 0$. By the above argument, passing to a subsequence, we have that $(y,Y)_{i_n}$ converges strongly to $(y,Y)\in W$. By continuity, $F[y,Y]=0$, while $\|(y,Y)\|_{2,2,-1/2}=1$, implying the intersection of $\ker(F)$ and $W$ is nontrivial. That is, by contradiction, (\ref{Zeq}) holds. An identical argument to that used in the proof of Proposition \ref{propisolaplace} now shows that $F$ has closed range.

Furthermore, since $F^F$ has the same form as $F$, an estimate of the form of (\ref{scalebrokenF}) also holds for ${(z,Z)\in\ker(F^*)}$, which implies that $\ker(F^*)$ is finite dimensional. Since the range of $F$ is contained in the range of $D\Phi$, we have surjectivity of $D\Phi$ and therefore completes the proof.
\end{proof}

%%%%%%%%%%%%%%%%%%%%%%%%%%%%%%%%%%%%%%%%%
\section{Critical points of the ADM mass}\label{Scriticalpoints}
In \cite{phasespace} Bartnik discusses a result of Corvino, which states that if there exists an asymptotically flat extensions to a compact manifold with boundary, minimising the ADM energy, then it must be a static metric \cite{Corvino2000}. Specifically, Bartnik argues that it would be more natural to obtain Corvino's result from the Hamiltonian considerations he uses to prove a similar result for complete manifolds with no boundary. Here we give such an argument, considering the mass rather than the energy, and obtain that critical points of the mass functional only occur if the exterior is stationary. Furthermore, if these stationary solutions are sufficiently regular, they must in fact be static black hole exteriors. It should be noted that our set of extensions is larger than the usual set of admissible extensions in the context of the Bartnik mass. In order to ensure the validity of the positive mass theorem, we would also require conditions on the mean curvature of $\Sigma$ (see \cite{miao2004positive}); however it is not clear how to modify the arguments here to include mean curvature boundary conditions.

The content of this section has also been discussed in a recent note \cite{massmini} using stronger boundary conditions than considered here, and indeed stronger than the preferred mean curvature boundary conditions mentioned above; however, an analogous analysis to that in Section \ref{Sphasespace} was not given.

As in the preceding section, we quote Bartnik's results where the proofs require no modifications to this case. Furthermore, the results established here are again based on adapting Bartnik's arguments to deal with the boundary. The results of Section \ref{Sphasespace} are precisely what is needed for these arguments to work in the case considered here.

The energy-momentum covector $\mathbb{P}_\mu=(m_0,p_i)$ is defined by
\begin{align}
16\pi m_0&:=\oint_{S_\infty}\og^{jk}(\onabla_kg_{ij}-\onabla_i g_{jk})dS^i,\\
16\pi p_i&:=2\oint_{S_\infty}\pi_{ij}dS^j.
\end{align}

It is useful to consider the pairing of the energy-momentum vector with some asymoptotic translation, $\xi_\infty=(\xi_\infty^0,\xi_\infty^i)\in\R^{1+3}$,
$$16\pi\xi_\infty\cdot\mathbb{P}=\oint_\infty\left(\xi_\infty^0\og^{ik}(\onabla_k g_{ij}-\onabla_j g_{ik})+2\xi_\infty^i\pi_{ij}\right)dS^j.$$
By writing this as scalar-valued flux integral at infinity, we can make sense of this as an integral over $M$ through the divergence theorem. To extend $\xi_\infty$ to a scalar function and vector field over $M$, we identify $\xi_\infty^0$ with a constant function and $\xi^i_\infty$ with a $\og$-parallel vector field in a neighbourhood of infinity; that is, we identify $\xi_\infty$ with some $\tilde{\xi}$, defined near infinity and satisfying $\onabla\tilde\xi\equiv0$. We then choose any smooth bounded $\xiref=(\xiref^0,\xiref^i)$ supported away from $\Sigma$ and with $\xiref\equiv\tilde{\xi}$ near infinity to represent $\xi_\infty$. This allows us to write the energy-momentum as
\begin{align}
16\pi\xi^0_\infty\mathbb{P}_0(g)=&\int_M\Big{(}\xiref^0(\og^{ki}\og^{jl}\onabla_k\onabla_l g_{ij}-\mathring{\Delta}\text{tr}_{\mathring{g}} g)\nonumber\\
&+\og^{ki}\og^{jl}\onabla_k \xiref^0(\onabla_l g_{ij}-\onabla_i \og_{jl})\Big{)} \sqrt{\mathring{g}},\label{Edefna}\\
16\pi\xi^i_\infty\mathbb{P}_i(\pi)=&\,2\int_M\left(\xiref^i\onabla_j\pi_i^j+\pi^j_i\onabla_j\xiref^i\right).\label{pdefna}
\end{align}

Now it should be noted that $\mathbb{P}$ is not well-defined everywhere on $\mathcal{F}$; however, it is well-defined on any constraint manifold $\mathcal{C}(s,S)$ with $(s,S)\in L^1=L^1_{-3}$. In Section 4 of \cite{phasespace}, it is shown that this definition is equivalent to the usual definition of the ADM energy-momentum and is in fact a smooth map on each  $\mathcal{C}(s,S)$ with $(s,S)\in L^1$.

It is well known that the mass must be added to the ADM Hamiltonian in order to generate the correct equations of motion \cite{RT}. The formal equations of motion arising from the ADM Hamiltonian are indeed the correct evolution equations, however the boundary terms, coming from the integration by parts, correspond to the linearisation of the energy-momentum; that is, for $(g,\pi)\in\mathcal{F}$, the correct Hamiltonian to generate the equations of motion is given by
\begin{equation}\label{RTham}
\mathcal{H}^{(\xi)}(g,\pi)=16\pi \xi^\mu_\infty\mathbb{P}_\mu-\int_M\xi^\mu\Phi_\mu(g,\pi),
\end{equation}
where $\xi\in\Xi:=\{\xi:\xi-\xiref\in\Hdirh(M)\}$. While the separate terms in (\ref{RTham}) are not well-defined on $\mathcal{F}$,
by combining the terms into a single integrand, the dominant terms in each component cancel exactly (cf. \cite{phasespace}). Henceforth, we consider the Hamiltonian to be this regularised one, with the dominant terms canceled. Note that the boundary conditions imposed on $\xi$ are required to ensure that the surface integrals on $\Sigma$, due to integration by parts in obtaining the equations of motion, do indeed vanish. This can be seen by considering the following:
\begin{align}
(h&,p)\cdot D\Phi^F_{(g,\pi)}[\xi]-\xi\cdot D\Phi_{(g,\pi)}[h,p]\nonumber\\
=&\,\nabla^i\Big{(}(\xi^0(\onabla_i\text{tr}_gh-\nabla^jh_{ij})+\onabla^j(\xi^0)h_{ij}-\text{tr}_g h\onabla_i(\xi^0))\sqrt{g}-2\xi^jp_{ij}\Big{)}\nonumber\\
&-\nabla^i\Big{(}2\pi^k_i h_{jk}\xi^j - \pi^{jk}h_{jk}\xi_i \Big{)}.\label{boundaryterms}
\end{align}
The surface integrals at infinity are exactly cancelled by the term $16\pi \xi^\mu_\infty\mathbb{P}_\mu$ (cf. \cite{phasespace}). In particular, we have for all $(g,\pi)\in\mathcal{F}$, $(h,p)\in T_{(g,\pi)}\mathcal{F}$ and $\xi\in\Xi$,
\begin{equation}\label{DHphi}
D\mathcal{H}^{(\xi)}_{(g,\pi)}[h,p]=-\int_M(h,p)\cdot D\Phi^F_{(g,\pi)}[\xi].
\end{equation}
The ability to express the variation of the Hamiltonian in this form is precisely what we mean by the statement that the correct equations of motion are generated. In this form, we can interpret the variation of the Hamiltonian density with respect to each of $g$ and $\pi$; that is, $\frac{\delta H^{(\xi)}}{\delta g}=D\Phi^F_{1\,(g,\pi)}[\xi]$. We then can write Hamilton's equations as
\begin{equation}\label{evoeq}
\frac{\partial}{\partial t}
\begin{bmatrix}
g\\ \pi
\end{bmatrix}
=-\begin{bmatrix}0&1\\-1&0\end{bmatrix}\circ D\Phi_{(g,\pi)}^F[\xi],
\end{equation}
where $t$ is interpreted as the flow parameter of $(N,X)$ in the full spacetime; this is exactly the Einstein evolution equations. This also motivates a result of Moncrief \cite{moncrief1}, equating solutions to $D\Phi_{(g,\pi)}^F[\xi]=0$ with Killing vectors in the spacetime. For this reason, we say an initial data set $(g,\pi)$ is stationary if there exists $\xi$, asymptotic to a constant timelike translation, satisfying $D\Phi_{(g,\pi)}^F[\xi]=0$.

It is evident that the Hamiltonian (\ref{RTham}) has the form of a Lagrange function, where we seek to find extrema of $\xi^\mu_\infty\mathbb{P}_\mu$ subject to the constraints being satisfied. As such, we need to make use of the following Lagrange multipliers theorem for Banach manifolds (cf. Theorem 6.3 of \cite{phasespace}).
\begin{theorem}\label{banach}
Suppose $K:B_1\rightarrow B_2$ is a $C^1$ map between Banach manifolds, such that $DK_u:T_uB_1\rightarrow T_{K(u)}B_2$ is surjective, with closed kernel and closed complementary subspace for all $u\in K^{-1}(0)$. Let $f\in C^1(B_1)$ and fix $u\in K^{-1}(0)$, then the following statements are equivalent:
\begin{enumerate}[(i)]
\item For all $v\in\ker DK_u$, we have
\begin{equation}Df_u(v)=0.\end{equation}
\item There is $\lambda\in B_2^*$ such that for all $v\in B_1$,
\begin{equation}Df_u(v)=\left<\lambda,DK_u(v)\right>,\end{equation}	
where $\left< \, , \right>$ refers to the natural dual pairing.
\end{enumerate}
\end{theorem}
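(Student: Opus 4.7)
The plan is to handle the two implications separately. The direction (ii) $\Rightarrow$ (i) is immediate: for $v \in \ker DK_u$ we have $DK_u(v) = 0$, so the identity in (ii) gives $Df_u(v) = \langle \lambda, 0\rangle = 0$.

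For (i) $\Rightarrow$ (ii), I would first set up a topological splitting using the hypothesis. Writing $E := \ker DK_u$ and choosing the closed complementary subspace $Z$ provided by the assumption, we have $T_u B_1 = E \oplus Z$ as a topological direct sum of Banach spaces. Since $DK_u$ is surjective, the restriction $DK_u|_Z : Z \to T_{K(u)} B_2$ is a continuous linear bijection; by the open mapping theorem it admits a bounded inverse $R := (DK_u|_Z)^{-1}$.

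With $R$ in hand, I would define the Lagrange multiplier by
\begin{equation*}
\langle \lambda, w \rangle := Df_u(R(w)), \qquad w \in T_{K(u)} B_2.
\end{equation*}
Being a composition of bounded linear maps, $\lambda \in B_2^*$. To verify the identity in (ii), decompose an arbitrary $v \in T_u B_1$ as $v = v_E + v_Z$ with $v_E \in E$ and $v_Z \in Z$. Then $DK_u(v) = DK_u(v_Z)$, so $R(DK_u(v)) = v_Z$ and hence $\langle \lambda, DK_u(v)\rangle = Df_u(v_Z)$. Hypothesis (i) kills the first piece in the decomposition $Df_u(v) = Df_u(v_E) + Df_u(v_Z)$, yielding $Df_u(v) = Df_u(v_Z) = \langle \lambda, DK_u(v)\rangle$, as required.

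The only real obstacle in this scheme is ensuring $\lambda$ is continuous, and this is precisely where the topological (rather than merely algebraic) splitting of the kernel and the open mapping theorem enter essentially. Algebraically one could always pick off a complementary subspace and define a linear functional satisfying the identity, but its boundedness is not automatic without a bounded linear section of $DK_u$, which is exactly what the hypotheses on $DK_u$ supply.
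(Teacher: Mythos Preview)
Your argument is correct and is the standard proof of the Lagrange multiplier theorem in the Banach setting. Note, however, that the paper does not actually supply a proof of this statement: it is quoted as a known result with a reference to Theorem~6.3 of Bartnik's phase space paper, so there is no in-paper proof to compare against. Your write-up would serve perfectly well as a self-contained justification; the only minor point is that in statement~(ii) the quantifier ``for all $v\in B_1$'' should of course be read as ``for all $v\in T_uB_1$'', which you have implicitly (and correctly) done.
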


From this, we prove the following.
\begin{theorem}\label{thmEcrit}
Let $\xi_\infty\in\R^{1+3}$ be some fixed future-pointing timelike vector, $(s,S)\in L^1$, and define $E^{(\xi_\infty)}(g,\pi)\in C^\infty(\mathcal{C}(s,S))$ by
$$E^{(\xi_\infty)}(g,\pi)=\xi^\mu_\infty\mathbb{P}_\mu(g,\pi).$$
For $(g,\pi)\in\mathcal{C}(s,S)$, the following statements are equivalent:
\begin{enumerate}[(i)]
\item For all $(h,p)\in T_{(g,\pi)}\mathcal{C}(s,S)$,
$$DE^{(\xi_\infty)}_{(g,\pi)}[h,p]=0.$$
\item There exists $\xi\in\Xi$ satisfying
$$D\Phi^F_{(g,\pi)}[\xi]=0.$$
\end{enumerate}

\end{theorem}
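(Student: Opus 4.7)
The plan is to apply the Lagrange multipliers theorem (Theorem \ref{banach}) to the constraint map $K = \Phi - (s,S) : \mathcal{F} \to \mathcal{N}$ and the regularised Hamiltonian $\mathcal{H}^{(\xiref)}$, where $\xiref$ is the fixed smooth bounded representative of $\xi_\infty$ supported away from $\Sigma$ introduced before (\ref{RTham}). The structural hypotheses required of $K$ at each point of $K^{-1}(0) = \mathcal{C}(s,S)$---surjectivity of $DK_{(g,\pi)} = D\Phi_{(g,\pi)}$, closedness of its kernel, and existence of a closed complement---are all contained in the proof of Theorem \ref{thmlevelset} (the last using that $\mathcal{F}$ is Hilbert). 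By definition $\mathcal{H}^{(\xiref)} = 16\pi E^{(\xi_\infty)} - \int_M \xiref \cdot \Phi$, so on $(h,p) \in T_{(g,\pi)}\mathcal{F}$,
\begin{equation*}
D\mathcal{H}^{(\xiref)}_{(g,\pi)}[h,p] = 16\pi DE^{(\xi_\infty)}_{(g,\pi)}[h,p] - \int_M \xiref \cdot D\Phi_{(g,\pi)}[h,p];
\end{equation*}
for $(h,p) \in \ker D\Phi_{(g,\pi)} = T_{(g,\pi)}\mathcal{C}(s,S)$ the last term vanishes, so (i) is equivalent to $D\mathcal{H}^{(\xiref)}_{(g,\pi)}$ vanishing on $\ker D\Phi_{(g,\pi)}$.

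For the implication (i) $\Rightarrow$ (ii), Theorem \ref{banach} produces a multiplier $\lambda \in \mathcal{N}^*$ with $D\mathcal{H}^{(\xiref)}_{(g,\pi)}[h,p] = \int_M \lambda \cdot D\Phi_{(g,\pi)}[h,p]$ for every $(h,p) \in T_{(g,\pi)}\mathcal{F}$. Combining this with (\ref{DHphi}) shows that $\lambda$ is a weak solution of $D\Phi^*_{(g,\pi)}[\lambda] = -D\Phi^F_{(g,\pi)}[\xiref]$. Since $\xiref$ is smooth, bounded, and $\onabla$-parallel outside a compact set, inspection of (\ref{adjointg})--(\ref{adjointpi}) together with $(g,\pi) \in \mathcal{F}$ confirms that $D\Phi^F_{(g,\pi)}[\xiref] \in L^2_{-5/2} \times W^{1,2}_{-3/2}$, which is precisely the regularity required in Proposition \ref{propweakstrong}. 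That proposition then upgrades $\lambda$ to an element of $\Hdirh$ on which the formal and true adjoints coincide, and in particular $\lambda$ vanishes on $\Sigma$ in the trace sense. Setting $\xi := \xiref + \lambda$ yields $\xi - \xiref = \lambda \in \Hdirh$, so $\xi \in \Xi$, and by linearity $D\Phi^F_{(g,\pi)}[\xi] = D\Phi^F_{(g,\pi)}[\xiref] + D\Phi^F_{(g,\pi)}[\lambda] = 0$.

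The converse (ii) $\Rightarrow$ (i) is immediate from (\ref{DHphi}): for such a $\xi$ the right-hand side vanishes for every $(h,p) \in T_{(g,\pi)}\mathcal{F}$, so $D\mathcal{H}^{(\xi)}_{(g,\pi)}[h,p] = 0$; writing $\mathcal{H}^{(\xi)} = 16\pi E^{(\xi_\infty)} - \int_M \xi \cdot \Phi$ and restricting to $(h,p) \in \ker D\Phi_{(g,\pi)}$ kills the $\Phi$ term, yielding $DE^{(\xi_\infty)}_{(g,\pi)}[h,p] = 0$. The main obstacle throughout is the boundary behaviour of $\xi$: Theorem \ref{banach} supplies $\lambda$ only as an element of $\mathcal{N}^*$, with no a priori trace on $\Sigma$, and it is exactly the Dirichlet regularity theory developed in Section \ref{Sphasespace}---in particular the vanishing trace conclusion of Proposition \ref{propweakstrong}---that promotes $\lambda$ to $\Hdirh$, and hence $\xi$ to $\Xi$, rather than to some weaker class in which $D\Phi^F_{(g,\pi)}[\xi] = 0$ would fail to encode the correct boundary condition.
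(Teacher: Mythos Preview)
Your proof is correct and follows essentially the same route as the paper's: apply the Lagrange multiplier theorem (Theorem \ref{banach}) to $K=\Phi-(s,S)$ and the regularised Hamiltonian, combine the resulting multiplier identity with (\ref{DHphi}), and then invoke Proposition \ref{propweakstrong} to upgrade $\lambda$ to $\Hdirh$ so that $\xi=\xiref+\lambda\in\Xi$. Your write-up is in fact slightly more explicit than the paper's in verifying that $D\Phi^F_{(g,\pi)}[\xiref]\in L^2_{-5/2}\times W^{1,2}_{-3/2}$ and in spelling out why (i) is equivalent to the vanishing of $D\mathcal{H}^{(\xiref)}$ on $\ker D\Phi$.
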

\begin{proof}
Assume $(i)$ holds for some $(\tilde g,\tilde \pi)$; we first show $(i)\implies(ii)$. Let $K(g,\pi)=\Phi(g,\pi)-(s,S)$ and let $f(g,\pi)=\mathcal{H}^{(\xi)}(g,\pi)$ for some $\xi\in\Xi$, then condition $(i)$ of Theorem \ref{banach} is satisfied. It follows that there exists $\lambda\in L^2_{-5/2}$ such that
$$D\mathcal{H}^{(\xi)}_{(\tilde g,\tilde \pi)}[h,p]=\int_M \lambda\cdot D\Phi_{(\tilde g,\tilde \pi)}[h,p],$$
for all $(h,p)\in T_{(\tilde g,\tilde \pi)}\mathcal{F}$, which combined with (\ref{DHphi}), gives
$$-\int_M(h,p)\cdot D\Phi^F_{(\tilde g,\tilde \pi)}[\xi]=\int_M\lambda\cdot D\Phi_{(\tilde g,\tilde \pi)}[h,p].$$
Now $D\Phi^F_{(\tilde g,\tilde \pi)}[\xi] \in L^2_{-5/2}\times W^{1,2}_{-3/2}$, so Proposition \ref{propweakstrong} then implies 
$$D\Phi_{(\tilde g,\tilde \pi)}^F[\xi+\lambda]=0,$$
and $\lambda\in\Hdirh(M)$, which in turn implies $(\xi+\lambda)\in\Xi$.\\

Conversely, assuming $(ii)$ holds at some $(\tilde g,\tilde \pi)$, it follows from (\ref{DHphi}) that
$$D\mathcal{H}^{(\xi)}_{(\tilde g,\tilde \pi)}[h,p]=0,$$
for all $(h,p)\in T_{(\tilde g,\tilde \pi)}\mathcal{F}$. Then by the definition of $\mathcal{H}^{(\xi)}$, we have
$$D\mathcal{H}^{(\xi)}_{(\tilde g,\tilde \pi)}[h,p]=DE^{(\xi_\infty)}_{(\tilde g,\tilde \pi)}[h,p]=0,$$
for all $(h,p)\in\mathcal{C}(s,S)$; that is, $(i)$ holds.
\end{proof}
Physically, $E^{(\xi_\infty)}$ is interpreted as the total energy viewed by an observer at infinity, whose worldline is generated by $\xi_\infty$. So Theorem \ref{thmEcrit} may be interpreted as the statement that critical points of the energy measured by $\xi_\infty$, correspond to solutions with Killing vectors asymptotic to $\xi_\infty$.

Let $\eta$ be the Minkowski metric with signature $(-,+,+,+)$, and define $\mathbb{P}^\mu=\eta^{\mu\nu}\mathbb{P}_\nu$. Further define the total mass, $m=\frac{-\mathbb{P}^\mu\mathbb{P}_\mu}{\sqrt{|\mathbb{P}^\mu\mathbb{P}_\mu|}}$. Recall that we have not imposed conditions on the boundary mean curvature; that is, we include initial data for which the positive mass theorem fails. Away from $m=0$, this is a smooth function on $\mathcal{C}(s,S)$ when $(s,S)\in L^1$. With this in mind, we have the following corollary of Theorem \ref{thmEcrit}.
\begin{corollary}\label{cormcrit}
Suppose $(g,\pi)\in\mathcal{C}(s,S)$ with $(s,S)\in L^1$, and $\mathbb{P}^\mu$ is a past-pointing timelike vector, then the following statements are equivalent:
\begin{enumerate}[(i)]
\item For all $(h,p)\in T_{(g,\pi)}\mathcal{C}(s,S)$, $Dm_{(g,\pi)}[h,p]=0$.
\item $(g,\pi)$ is a stationary initial data set, whose stationary Killing vector is proportional to $\mathbb{P}$ at infinity and vanishes on $\Sigma$.
\end{enumerate}
\end{corollary}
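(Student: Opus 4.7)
The plan is to reduce the corollary to Theorem \ref{thmEcrit} by identifying $Dm$ with $DE^{(\xi_\infty)}$ for a specific choice of $\xi_\infty$ determined by $\mathbb{P}$ itself. Concretely, away from $m=0$ the function $m^2 = -\eta^{\mu\nu}\mathbb{P}_\mu \mathbb{P}_\nu$ is smooth on $\mathcal{C}(s,S)$, so the chain rule gives
\begin{equation*}
Dm_{(g,\pi)}[h,p] = -\frac{1}{m}\,\mathbb{P}^\mu\, D\mathbb{P}_{\mu,(g,\pi)}[h,p].
\end{equation*}
At the point $(g,\pi)$ of interest, fix the constant 4-vector $\xi_\infty^\mu := -\mathbb{P}^\mu(g,\pi)/m(g,\pi)$. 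Since $\mathbb{P}^\mu$ is past-pointing timelike, $\xi_\infty$ is future-pointing timelike. Because $E^{(\xi_\infty)}(g',\pi') = \xi_\infty^\mu \mathbb{P}_\mu(g',\pi')$ is linear in its first argument, this choice yields the identity
\begin{equation*}
DE^{(\xi_\infty)}_{(g,\pi)}[h,p] = \xi_\infty^\mu\, D\mathbb{P}_{\mu,(g,\pi)}[h,p] = Dm_{(g,\pi)}[h,p].
\end{equation*}
Hence $(i)$ is exactly the statement that $DE^{(\xi_\infty)}$ annihilates $T_{(g,\pi)}\mathcal{C}(s,S)$.

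To pass from this to $(ii)$, I apply Theorem \ref{thmEcrit} with this $\xi_\infty$. It produces some $\xi \in \Xi$ with $D\Phi^F_{(g,\pi)}[\xi]=0$, and by definition of $\Xi$ the field $\xi$ is asymptotic to $\xi_\infty = -\mathbb{P}/m$ at infinity, hence proportional to $\mathbb{P}$ there. The vanishing on $\Sigma$ then comes for free from the boundary conditions built into $\Xi$: by the choice of $\xiref$ (smooth, bounded, supported away from $\Sigma$, and equal to the parallel extension of $\xi_\infty$ near infinity), the restriction $\xi|_\Sigma$ equals $(\xi - \xiref)|_\Sigma$ in the trace sense, and the latter vanishes because $\xi-\xiref \in \Hdirh \subset \bH^1_{-1/2}$, whose elements have vanishing trace on $\Sigma$. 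This also shows that the asymptotic translation $\xi_\infty$ is genuinely future-pointing timelike, so $\xi$ qualifies as a stationary Killing initial data.

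For the converse, assume $(ii)$ and let $\xi$ be the given Killing field, with $\xi \to c\mathbb{P}$ at infinity for some nonzero $c$. After rescaling we may take $c = -1/m$ so that the asymptotic value $\xi_\infty := c\mathbb{P}$ is future-pointing timelike; the rescaled field still satisfies $D\Phi^F_{(g,\pi)}[\xi]=0$ and lies in the corresponding $\Xi$. Applying the other direction of Theorem \ref{thmEcrit} gives $DE^{(\xi_\infty)}_{(g,\pi)}[h,p]=0$ for all $(h,p) \in T_{(g,\pi)}\mathcal{C}(s,S)$, and the chain-rule identity above then yields $Dm_{(g,\pi)}[h,p]=0$, establishing $(i)$.

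In this scheme nothing really new is needed beyond Theorem \ref{thmEcrit}; the only genuine point to check is that the chain rule converts $Dm$ into $DE^{(\xi_\infty)}$ for a $\xi_\infty$ that is timelike (so that Theorem \ref{thmEcrit} applies), and the only mild subtlety is the bookkeeping that $\xi \in \Xi$ automatically encodes both the correct asymptotic behaviour and the Dirichlet condition $\xi|_\Sigma=0$. The hypothesis that $\mathbb{P}$ is past-pointing timelike is used precisely to guarantee $m>0$ and $\xi_\infty$ future-pointing timelike, ensuring $\xi_\infty$ is an admissible asymptotic translation in the sense of Theorem \ref{thmEcrit}.
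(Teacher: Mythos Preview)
Your proof is correct and follows the same approach as the paper: choose $\xi_\infty^\mu=-\mathbb{P}^\mu/m$, identify $Dm_{(g,\pi)}$ with $DE^{(\xi_\infty)}_{(g,\pi)}$ at the given point, and invoke Theorem~\ref{thmEcrit} in both directions. Your write-up is in fact more explicit than the paper's, spelling out the chain-rule identity $Dm=-\tfrac{1}{m}\mathbb{P}^\mu D\mathbb{P}_\mu=DE^{(\xi_\infty)}$ and why membership in $\Xi$ forces $\xi|_\Sigma=0$, whereas the paper only records $E^{(\xi_\infty)}(g,\pi)=m$ and leaves these details implicit.
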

It is worth noting that a Killing vector that is asymptotically constant, must in fact be proportional to $\mathbb P$ at infinity \cite{BeigChruscielKilling}.
\begin{proof}
We first show the implication $(i)\implies(ii)$. Let $\xi_\infty^\mu=-\frac1m\mathbb{P}^\mu$ be a future-pointing unit timelike vector, parallel to $\mathbb{P}^\mu$. It then follows that $E^{(\xi_\infty)}(g,\pi)=m$, so $(i)$ implies condition $(i)$ of Theorem \ref{thmEcrit} and $(ii)$ follows.

Conversely, if $(ii)$ holds, then possibly after rescaling, we have some $\xi\in\Xi$, where $\xi_\infty^\mu=-\frac1m\mathbb{P}^\mu$, satisfying $D\Phi^F_{(g,\pi)}[\xi]=0$. Again, $E^{(\xi_\infty)}(g,\pi)=m$ and Theorem \ref{thmEcrit} implies $(i)$.
\end{proof}
Provided $g$ is sufficiently smooth, the stationarity conclusion can in fact be replaced with staticity by the following argument. It is well-known that if a Killing vector field vanishes identically on a closed spacelike 2-surface, then that 2-surface is the bifurcation surface of a bifurcate Killing horizon (see, for example \cite{WaldQFT}). Furthermore, a result of Chru\'sciel and Wald \cite{chruscielwald} implies the existence of a maximal spacelike hypersurface in the full spacetime containing the bifurcation surface. Then a staticity theorem of Sudarsky and Wald can be applied \cite{SW1} (cf. Section 7 of \cite{chruscielcostauniqueness}), which states, under the assumption of the existence of a maximal spacelike hypersurface, if the stationary Killing vector generates the horizon, then the solution is static. That is, for the vacuum case, critical points of the mass occur exactly when the solution is the region exterior to a Schwarzschild black hole. It follows that for generic choices of $\og$ on $\Sigma$, there are no smooth critical points of the mass functional.

\begin{remark}
The same analysis can be performed with $\pi\equiv0$, considering only the Hamiltonian constraint. In this case, the mass and energy are interchangeable, and we only have the lapse as the Lagrange multiplier. The conclusion from the above analysis is then that critical points of the mass correspond to static solutions, as the Killing vector is necessarily hypersurface orthogonal (cf. Theorem 8 of \cite{Corvino2000}).
\end{remark}

\subsection{Geometric boundary data}
The asymptotic value of the stationary Killing vector field predicted by Theorem \ref{thmEcrit}, comes from our choice of $\xiref$, which above we chose to be supported away from $\Sigma$. However, if we allow $\xiref$ to be nonzero on $\Sigma$ then the energy-momentum can no longer be expressed as integrals over $M$, and expression (\ref{DHphi}) no longer holds. To deal with this, we leave $\xiref$ unchanged in the definition of $\mathbb{P}$ and we introduce $\xi_\Sigma=(\xi^0_\Sigma,0,0,0)$ with support near $\Sigma$ and $\xi^0_\Sigma$ constant on $\Sigma$. We then modify the Hamiltonian to allow for ${\xi\in\hat\Xi:=\{\xi:\xi-\xiref-\xi_\Sigma\in\Hdirh(M)\}}$
\begin{equation}\label{RTham2}
\hat{\mathcal{H}}^{(\xi)}(g,\pi)=16(\pi \xi^\mu_\infty\mathbb{P}_\mu-\xi^0_\Sigma m_{BY}(g;\Sigma))-\int_M\xi^\mu\Phi_\mu(g,\pi),
\end{equation}
where $m_{BY}$ is the Brown-York mass. This is given by
$$m_{BY}(\Sigma)=\frac1{8\pi}\oint_\Sigma (h_0-h_g)\,dS_g,$$
where $h$ is the mean curvature of $\Sigma$ in $M$ and $h_0$ is the mean curvature of $\Sigma$ isometrically embedded in $\R^3$, both computed with respect to the unit normal pointing towards infinty. Note that we could simply replace the term $-16\pi m_{BY}$ with twice the mean curvature of $\Sigma$ in $M$, as the addition of a constant does not change the equations of motion; however, this Hamiltonian is more intuitive as it gives a sensible measure of the energy of the system.
In coordinates adapted to $\Sigma$, the linearisation of $m_{BY}(g;\Sigma)$ is given by (cf. \cite{miao2003existence})
$$ 16\pi Dm^{BY}_{(g,\pi)}(\Sigma)[h,p]=\oint_\Sigma\left(\nabla_nh_{nn}-H_\Sigma(g)h_{nn}+2h^{AB}K_{AB}-2\nabla^ih_{in}+\nabla_n\tr_gh\right)dS,$$
where $A,B=1,2$ are coordinates on $\Sigma$, $n$ is the normal direction, and $K_{AB}$ is the second fundamental form of $\Sigma$ with respect to $g$ and $n$. Since $h$ vanishes on $\Sigma$, this reduces to
$$ 16\pi Dm^{BY}_{(g,\pi)}(\Sigma)[h,p]=\oint_\Sigma\left(\nabla_n\tr_gh-\nabla_nh_{nn}\right)dS,$$
where we have also made use of the divergence theorem. Now it is straightforward to check (cf. \ref{boundaryterms}) we have
$$(h,p)\cdot D\Phi^F_{(g,\pi)}[\xi]-\xi\cdot D\Phi_{(g,\pi)}[h,p]=16\pi\xi^0_\Sigma Dm^{BY}_{(g,\pi)}(\Sigma)[h,p].$$
This then gives us (cf. \ref{DHphi})
\begin{equation}\label{DHphi2}
D\hat{\mathcal{H}}^{(\xi)}_{(g,\pi)}[h,p]=-\int_M(h,p)\cdot D\Phi^F_{(g,\pi)}[\xi],
\end{equation}
for all $(h,p)\in T_{(g,\pi)}\mathcal{F}$. At this point, it only requires superficial modifications to the proofs of Theorem \ref{thmEcrit} and Corollary \ref{cormcrit}, to obtain the following.

\begin{theorem}\label{thmEcrit2}
Let $\xi_\infty\in\R^{1+3}$ be some fixed future-pointing timelike vector, $\xi_\Sigma\in\R$ be some fixed constant, $(s,S)\in L^1$, and define $\hat{E}^{(\xiref)}(g,\pi)\in C^\infty(\mathcal{C}(s,S))$ by
$$\hat{E}^{(\xiref)}(g,\pi)=\xi^\mu_\infty\mathbb{P}_\mu(g,\pi)-\xi_\Sigma m^{BY}(g;\Sigma).$$
For $(g,\pi)\in\mathcal{C}(s,S)$, the following statements are equivalent:
\begin{enumerate}[(i)]
\item For all $(h,p)\in T_{(g,\pi)}\mathcal{C}(s,S)$,
$$D\hat{E}^{(\xiref)}_{(g,\pi)}[h,p]=0.$$
\item There exists $\xi\in\hat\Xi$ satisfying
$$D\Phi^F_{(g,\pi)}[\xi]=0.$$
\end{enumerate}

\end{theorem}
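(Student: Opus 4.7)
The plan is to follow the exact template of Theorem \ref{thmEcrit}, with the modified Hamiltonian $\hat{\mathcal{H}}^{(\xi)}$ from (\ref{RTham2}) replacing $\mathcal{H}^{(\xi)}$, the identity (\ref{DHphi2}) replacing (\ref{DHphi}), and the enlarged test space $\hat\Xi$ replacing $\Xi$. The Brown--York correction was introduced into the Hamiltonian precisely so that the full variation still reduces to the bulk expression $-\int_M(h,p)\cdot D\Phi^F_{(g,\pi)}[\xi]$, now valid for $\xi$ that are permitted to be nontrivial on $\Sigma$; this is the single structural input that makes the adaptation work.

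For $(i)\Rightarrow(ii)$, I would fix any $\xi\in\hat\Xi$ and apply the Lagrange multiplier statement (Theorem \ref{banach}) on $B_1=\mathcal{F}$, $B_2=\mathcal{N}$, with $K(g,\pi)=\Phi(g,\pi)-(s,S)$ and $f=\hat{\mathcal{H}}^{(\xi)}$. Surjectivity of $DK$ with closed kernel and closed complementary subspace is Theorem \ref{thmlevelset}. For $(h,p)\in T_{(\tilde g,\tilde\pi)}\mathcal{C}(s,S)=\ker DK$, the constraint integrand in $\hat{\mathcal{H}}^{(\xi)}$ vanishes to first order, so $D\hat{\mathcal{H}}^{(\xi)}_{(\tilde g,\tilde\pi)}[h,p]=D\hat{E}^{(\xiref)}_{(\tilde g,\tilde\pi)}[h,p]$, which is zero by hypothesis. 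Theorem \ref{banach} then yields $\lambda\in\mathcal{N}^*=L^2_{-5/2}(\Lambda^0\times TM)$ with
\begin{equation*}
D\hat{\mathcal{H}}^{(\xi)}_{(\tilde g,\tilde\pi)}[h,p]=\int_M\lambda\cdot D\Phi_{(\tilde g,\tilde\pi)}[h,p]
\end{equation*}
for every $(h,p)\in T_{(\tilde g,\tilde\pi)}\mathcal{F}$. Combining this with (\ref{DHphi2}) says that $\lambda$ is a weak solution of $D\Phi^*_{(\tilde g,\tilde\pi)}[\lambda]=-D\Phi^F_{(\tilde g,\tilde\pi)}[\xi]$, whose right-hand side lies in $L^2_{-5/2}\times W^{1,2}_{-3/2}$. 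Proposition \ref{propweakstrong} then upgrades $\lambda$ to an element of $\Hdirh$, in particular vanishing on $\Sigma$ in the trace sense; since $\hat\Xi$ is affine over $\Hdirh$, we get $\xi+\lambda\in\hat\Xi$ and $D\Phi^F_{(\tilde g,\tilde\pi)}[\xi+\lambda]=0$, which is $(ii)$.

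For the converse $(ii)\Rightarrow(i)$, if $\xi\in\hat\Xi$ satisfies $D\Phi^F_{(\tilde g,\tilde\pi)}[\xi]=0$, then (\ref{DHphi2}) immediately gives $D\hat{\mathcal{H}}^{(\xi)}_{(\tilde g,\tilde\pi)}[h,p]=0$ for all $(h,p)\in T_{(\tilde g,\tilde\pi)}\mathcal{F}$. Restricting to $(h,p)\in T_{(\tilde g,\tilde\pi)}\mathcal{C}(s,S)$ collapses the constraint integral in $\hat{\mathcal{H}}^{(\xi)}$ at first order, leaving $D\hat{E}^{(\xiref)}_{(\tilde g,\tilde\pi)}[h,p]=0$, which is $(i)$.

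I do not anticipate a genuine obstacle here: the entire content is bookkeeping around the additional $\xi^0_\Sigma m^{BY}$ term, and the key identity (\ref{DHphi2}) has already been verified immediately before the theorem statement. The only point meriting real care is that Proposition \ref{propweakstrong} still produces a multiplier $\lambda$ whose boundary trace vanishes; this is exactly the regularity needed so that $\xi+\lambda$ remains in the affine space $\hat\Xi$ rather than accidentally leaking out into a larger class. Because Proposition \ref{propweakstrong} is proved for arbitrary right-hand sides in $L^2_{-5/2}\times W^{1,2}_{-3/2}$ and is insensitive to the particular test field $\xi$ one started with, no modification of that argument is required.
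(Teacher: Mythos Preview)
Your proposal is correct and follows precisely the approach the paper indicates: the paper does not give an independent proof of this theorem, stating only that ``it only requires superficial modifications to the proofs of Theorem \ref{thmEcrit} and Corollary \ref{cormcrit}'', and your write-up spells out exactly those modifications (replacing $\mathcal{H}^{(\xi)}$, (\ref{DHphi}), and $\Xi$ by $\hat{\mathcal{H}}^{(\xi)}$, (\ref{DHphi2}), and $\hat\Xi$). Your remark that the crucial point is that Proposition \ref{propweakstrong} still forces $\lambda\in\Hdirh$, so that $\xi+\lambda$ stays in the affine class $\hat\Xi$, is precisely the one place requiring a moment's thought, and you handle it correctly.
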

Note that this version of the theorem does not force the Killing vector to vanish on the boundary, but rather it is orthogonal to the initial data hypersurface there. By fixing $\xi_\Sigma=1$ on $\Sigma$ Corollary \ref{cormcrit} becomes:
\begin{corollary}\label{cormcrit2}
Suppose $(g,\pi)\in\mathcal{C}(s,S)$ with $(s,S)\in L^1$, and $\mathbb{P}^\mu$ is a past-pointing timelike vector, then the following statements are equivalent:
\begin{enumerate}[(i)]
\item For all $(h,p)\in T_{(g,\pi)}\mathcal{C}(s,S)$, $Dm_{(g,\pi)}[h,p]=Dm^{BY}_{(g,\pi)}(\Sigma)[h,p]$.
\item $(g,\pi)$ is a stationary initial data set, whose stationary Killing vector is proportional to $\mathbb{P}^\mu$ at infinity and $(-m_0,0,0,0)$ on $\Sigma$, with the same constant of proportionality.
\end{enumerate}
\end{corollary}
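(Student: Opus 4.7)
My approach is to adapt the proof of Corollary~\ref{cormcrit} using Theorem~\ref{thmEcrit2} in place of Theorem~\ref{thmEcrit}, with the boundary parameter fixed at $\xi_\Sigma = 1$ as anticipated in the text immediately before the statement. The only real content beyond invoking Theorem~\ref{thmEcrit2} is the identification of $D\hat{E}^{(\xiref)}_{(g,\pi)}$ with $Dm_{(g,\pi)} - Dm^{BY}_{(g,\pi)}(\Sigma)$ at the chosen base point.

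For the implication $(i) \Rightarrow (ii)$, I would choose $\xi^\mu_\infty := -\tfrac{1}{m}\mathbb{P}^\mu$, a future-pointing timelike vector parallel to $\mathbb{P}^\mu$ (since $\mathbb{P}^\mu$ is past-pointing timelike). The same calculation used in the proof of Corollary~\ref{cormcrit}, namely $-\mathbb{P}^\mu\mathbb{P}_\mu = m^2$ with $\xi_\infty$ held fixed during the variation of $(g,\pi)$, shows that $D(\xi^\mu_\infty \mathbb{P}_\mu)_{(g,\pi)}[h,p] = Dm_{(g,\pi)}[h,p]$ at the base point. Combining this with $\xi_\Sigma = 1$ and using the definition of $\hat{E}^{(\xiref)}$ from Theorem~\ref{thmEcrit2} yields $D\hat{E}^{(\xiref)}_{(g,\pi)}[h,p] = Dm_{(g,\pi)}[h,p] - Dm^{BY}_{(g,\pi)}(\Sigma)[h,p]$. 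Hypothesis $(i)$ then says the left-hand side vanishes on $T_{(g,\pi)}\mathcal{C}(s,S)$, so Theorem~\ref{thmEcrit2} furnishes some $\xi \in \hat{\Xi}$ satisfying $D\Phi^F_{(g,\pi)}[\xi] = 0$. By the definition of $\hat{\Xi}$, the Killing vector $\xi$ is asymptotic to $-\mathbb{P}/m$ at infinity and equals $(1,0,0,0)$ on $\Sigma$; after an overall rescaling of $\xi$, these two boundary values are of the form $c\mathbb{P}^\mu$ and $c(-m_0,0,0,0)$ for a common constant $c$, yielding $(ii)$.

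The converse $(ii) \Rightarrow (i)$ runs the same argument in reverse: rescale the stationary Killing vector from $(ii)$ so that $\xi|_\infty = -\mathbb{P}/m$ and $\xi|_\Sigma = (1,0,0,0)$, placing $\xi \in \hat{\Xi}$ with parameters $\xi_\infty = -\mathbb{P}/m$ and $\xi_\Sigma = 1$. Theorem~\ref{thmEcrit2} then gives $D\hat{E}^{(\xiref)}_{(g,\pi)} = 0$ on $T_{(g,\pi)}\mathcal{C}(s,S)$, which by the identity from the forward direction becomes $Dm_{(g,\pi)} - Dm^{BY}_{(g,\pi)}(\Sigma) = 0$.

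I expect the only point requiring any care is the bookkeeping on the proportionality constants between the asymptotic and boundary values of the Killing vector---the analytic content is identical to the proof of Corollary~\ref{cormcrit}, and the structural role of the boundary term $\xi_\Sigma m^{BY}$ in the Hamiltonian (\ref{RTham2}) is exactly what makes $D\hat{E}^{(\xiref)} = Dm - Dm^{BY}(\Sigma)$ drop out for the choice $\xi_\Sigma = 1$.
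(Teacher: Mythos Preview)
Your proposal is correct and follows exactly the approach the paper indicates: the paper does not give a separate proof of this corollary, stating only that it requires superficial modifications to the proofs of Theorem~\ref{thmEcrit} and Corollary~\ref{cormcrit} with the choice $\xi_\Sigma=1$. Your identification $D\hat E^{(\xiref)}=Dm-Dm^{BY}(\Sigma)$ at the base point via $\xi_\infty=-\mathbb P/m$, followed by an appeal to Theorem~\ref{thmEcrit2}, is precisely that modification.
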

\begin{remark}
Corollary \ref{cormcrit2} is not quite what is required to demonstrate that critical points of the mass, over the space of extensions satisfying Bartnik's geometric boundary conditions, are stationary/static; however, it does illuminate the connection. To prove such a result, we would likely need to conduct the entire analysis again, replacing $\Phi$ with $(\Phi,H)$, so as to include the boundary conditions in the constraint map itself. However, this would likely require significant changes to the analysis presented here.
\end{remark}

By choosing different conditions on $\xi$, both at infinity and on $\Sigma$, we will obtain different conditions for solutions to be stationary; essentially, these ideas can be used to find the appropriate condition for the existence of a Killing vector with prescribed boundary conditions. In \cite{PhysRevD.90.104034}, we use similar ideas to prove that the first law of black hole mechanics gives a condition for stationarity, when the boundary conditions on the Killing vector are inspired by bifurcate Killing horizons. Here we can include the quasilocal generalised angular momentum used in \cite{PhysRevD.90.104034} to obtain a similar result, sans the area/surface gravity term (as the metric is fixed on $\Sigma$ here). One can also infer that $\hat{E}^{(\xiref)}$ has no critical points when $\xi_\infty=0$ from the fact that $D\Phi^F$ has trivial kernel in $L^2_{-1/2}$. That is, one immediately has the expected, or perhaps even obvious, result that the Brown-York mass (equivalently, the mean curvature of $\Sigma$) has no critical points.

\section{Acknowledgements}
This research was supported in part by a UNE Research Seed Grant. The author would like to thank the Institut Henri Poincar\'e, where part of this research was completed.

\bibliographystyle{abbrv}
\bibliography{../../refsnew}
\end{document}